\pdfoutput = 1

%
%
%
%
%
%
%
\documentclass[%
 reprint,
 amsmath,amssymb,
 aps,longbibliography,
]{revtex4-2}

\usepackage{graphicx}
\usepackage{dcolumn}
\usepackage{bm}

\usepackage{hyperref}


\usepackage{booktabs}
\usepackage{tabularx}
\usepackage{amsmath,amsfonts,amssymb,amsthm}
\usepackage[caption=false]{subfig}

\renewcommand{\d}{\mathrm{d}}

\newcommand{\so}{M_{\text{S}}}
\newcommand{\io}{M_{\text{I}}}
\newcommand{\aw}{M_{\text{SI}}}
\newtheorem{remark}{Remark}
\newtheorem{lemma}{Lemma}
\newtheorem{theorem}{Theorem}

\usepackage{comment}
\usepackage{appendix}
\usepackage{chngcntr}

\begin{document}


\title{Epidemic paradox induced by awareness driven network dynamics}
\author{
\normalsize{
Csegő Balázs Kolok, \textsuperscript{1,2}
Gergely \'Odor,\textsuperscript{3,2}
Dániel Keliger,\textsuperscript{4,2}
M\'arton Karsai,\textsuperscript{3,2}
}
\\
\small{
1. Institute of Mathematics, Faculty of Science, ELTE Eötvös Loránd University, Budapest, Hungary\\
2. National Laboratory of Health Security, HUN-REN Alfr\'ed R\'enyi Institute of Mathematics, Budapest, Hungary\\
3. Department of Network and Data Science, Central European University, Vienna, Austria\\
4. Department of Stochastics, Institute of Mathematics, \\ Budapest University of Technology and Economics, Budapest, Hungary
}
}

\date{\today}
\begin{abstract}

We study stationary epidemic processes in scale-free networks with local awareness behavior adopted by only susceptible, only infected, or all nodes. We find that while the epidemic size in the susceptible-aware and the all-aware models scales linearly with the network size, the scaling becomes sublinear in the infected-aware model. Hence, fewer aware nodes may reduce the epidemic size more effectively; a phenomenon reminiscent of Braess’s paradox. We present numerical and theoretical analysis, and highlight the role of influential nodes and their disassortativity to raise epidemic awareness.

\end{abstract}


\maketitle

\begin{figure}[b]
\includegraphics[width=0.5\textwidth]{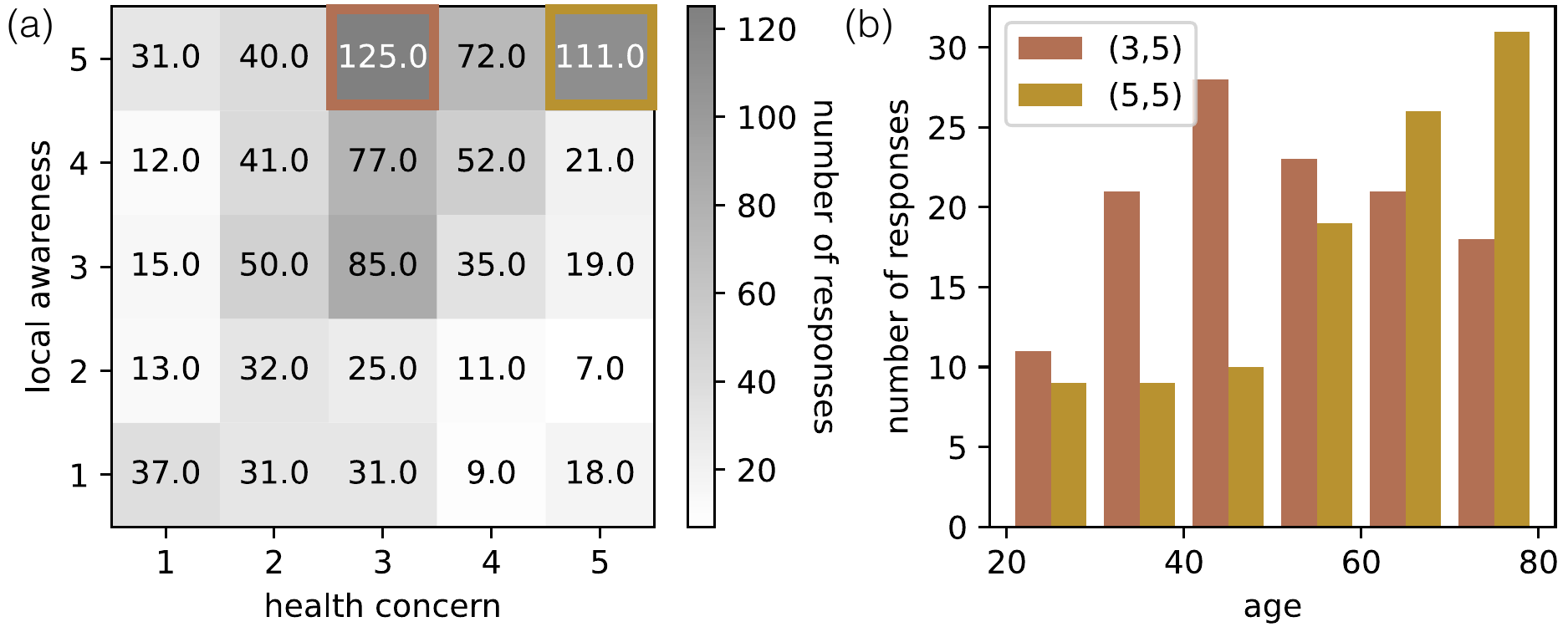} 
\caption{(a) 2D histogram of local awareness and concerns of respondents about their own health in case of an infection in the MASZK survey. The two most frequent answers are framed and written in white. (b)  The age distribution of the two most frequent answers from subfigure (a). Among the most aware respondents, those who had medium health concerns (3) were middle-aged (red) and those who had the highest health concerns (5) were older (yellow). }
\label{fig:1}
\end{figure}

Epidemics drive substantial changes in human behavior, affecting our everyday lives, social interactions, and economic activities \cite{champion2008health, abraham2015health}. Most recently, amid the COVID-19 pandemic, people have heightened their protective measures \cite{JOSE202141, BORKOWSKI2021102906}, including mask wearing~\cite{10.1371/journal.pone.0240785}, increased hygiene practices \cite{gibson2020capability}, social distancing \cite{NBERw28139, andersen2020early}, and the avoidance of large gatherings \cite{ebrahim2020covid, PhysRevLett_social}. These behavioral changes are intended to reduce disease transmission, which is often captured in behavior-disease -- also called ``awareness'' -- models \cite{funk2010modelling,wang2015coupled,verelst2016behavioural,teslya2020impact}. 

In this Letter, we focus on a subtle, but powerful form of awareness, called ``local awareness'' \cite{wu2012impact}, which describes the behavioral changes adopted by individuals when they learn about an increase in the prevalence of the disease within their direct contacts, like among their family members or friends. In such situations, individuals may take preventive measures more seriously causing a more substantial reduction in the disease transmission. Several models have been proposed to quantify the impact of local awareness in epidemic models \cite{PhysRevLett_adaptive, funk2009spread, kiss2010impact, perra2011towards}, some of which were validated in real epidemic scenarios~\cite{odor2024epidemic}. Our goal is to demonstrate how small changes in the awareness model -- inspired by recent survey results -- can cause substantial and even paradoxical changes in the resulting epidemic dynamics and outcome.

Our research design was inspired by the Hungarian MASZK COVID-19 questionnaire \cite{maszk}, where we asked respondents to rate their willingness to engage in local awareness, and about their concern regarding their own health in case of an infection on a scale of 1 to 5. Fig. \ref{fig:1}(a) shows that the respondents clustered into two distinct groups: those who rated both their health concerns and their awareness level high (5,5), and those who rated their health concerns medium and their local awareness level high (3,5). The former group contains predominantly middle-aged individuals, while the latter group contains predominantly older ones (Fig. \ref{fig:1}(b)). This observation is consistent with previous COVID-19 studies revealing different motivations to engage in protective behavior. In these survey experiments, younger people were found to be more concerned about infecting others while elder people tended to focus on protecting themselves \cite{shiina2020relationship,germani2020emerging}. It has also been shown that age is not the sole determining factor~\cite{pappa2021recession}; occupation and socioeconomic status may also influence motivations for protective behaviors.

To investigate the impact of motivational differences to engage in awareness behavior, we extend the Susceptible-Infected-Susceptible (SIS) epidemic model with three distinct awareness models: (i) S-aware, where only the susceptible individuals are aware to protect their own health; (ii) I-aware, where only the infected individuals are aware to protect their peers; and (iii) SI-aware, where both susceptible and infected individuals engage in awareness behavior. While the effects of these awareness models have not been fully explored at the agent level, previous studies have shown that local awareness, in general, can raise the epidemic threshold \cite{wu2012impact,sahneh2012existence,liu2015interplay,moinet2018effect}, and I-awareness can reduce the epidemic size more effectively than S-awareness~\cite{funk2010modelling}.

Going beyond these initial observations, our study aims to understand how awareness impacts the epidemic spreading dynamics in the three defined models. Our simulations reveal a counter-intuitive result, showing that higher awareness levels can lead to larger epidemics, which is reminiscent of Braess's paradox \cite{braess1968paradoxon}. The connection between epidemics and Braess's paradox has previously been made by \cite{zhang2013braess} in a game theoretic setting. In contrast, our result is an emergent network phenomenon, which we support by a theoretical analysis, a demonstration on real social networks, and an intuitive explanation of the metastable state of the SIS model in the three awareness scenarios.

\textit{Models and methods.} 
Networks are defined by a set of nodes $V$ and edges $E$, which represent connections between the nodes. We focus on random networks with a power-law degree distribution generated by the Chung-Lu model \cite{chunglu, fasino2021generating}, a generalization of the Erdős-Rényi model. In a network of size $n$, we add an edge between nodes $i$ and $j$ with probability $ w_{ij}:=\kappa \frac{d_i d_j}{D}$,  where $D=\sum_i d_i$ is a normalization term, $\kappa>0$ is a density parameter, and $d_i$ is a power-law sequence with exponent~$\gamma$~\cite{fasino2021generating}. Note that vertex $i$ has expected degree $\kappa d_i$, and that the second moment of the degree sequence is infinite for $\gamma \in (2,3)$, and finite if $\gamma > 3$.

An SIS process assumes each network node to be either susceptible or infected. We initialize all nodes to be susceptible, except a small, randomly selected set of infected seed nodes. Thereafter, in each step, an infected node infects its susceptible neighbors with probability $\beta_0$, while recovering with probability~$\mu$. Eventually, the SIS process is known to reach a metastable state, where the number of infected individuals does not change beyond statistical fluctuations. We denote this metastable epidemic size by $I_{\infty}$. 

To introduce local awareness to the SIS model we use an exponentially decaying awareness function inspired by the models of \cite{zhang2014suppression,wu2012impact}. If an infected node $u$ and a susceptible node $v$ are connected by an edge, $u$ infects $v$ with probability
\begin{equation}
\label{eq:betadef}
    \beta(u,v,t) = \beta_0 a(v,t)^{\alpha_S} a(u,t)^{\alpha_I},
\end{equation}
where $a(v,t) =  e^{- N_I(v,t)}$ is the awareness function of an aware node $v$ (function $a(u,t)$ has the same form), $N_I(v,t)$ denotes the number of infected neighbors of node $v$ at time $t$, and the constants $\alpha_S$ and $\alpha_I$ determine the nature of the local awareness model in the population. We discuss why the awareness function depends on the number and not the proportion of the infected neighbors in Supplementary Section 1. We distinguish three notable prevalence-based local awareness models:
\begin{itemize}
    \item[(i)] S-aware ($\so$), where $\alpha_S=1$ and $\alpha_I=0$ (only susceptible nodes are aware) , 
    \item[(ii)] I-aware ($\io$), where  $\alpha_S=0$ and $\alpha_I=1$ (only infected nodes are aware),
    \item[(iii)] SI-aware ($\aw$), where $\alpha_S=\alpha_I=1$ (infected nodes in the neighborhoods of both susceptible and infected nodes are aware).
\end{itemize}
Fig. \ref{fig:2}(a) illustrates how the infected neighbors $N_I(u,t)$ and $N_I(v,t)$ are counted in the exponent of the infection probabilities of the three awareness models. 

\begin{figure}[b]
\includegraphics[width=1\columnwidth]{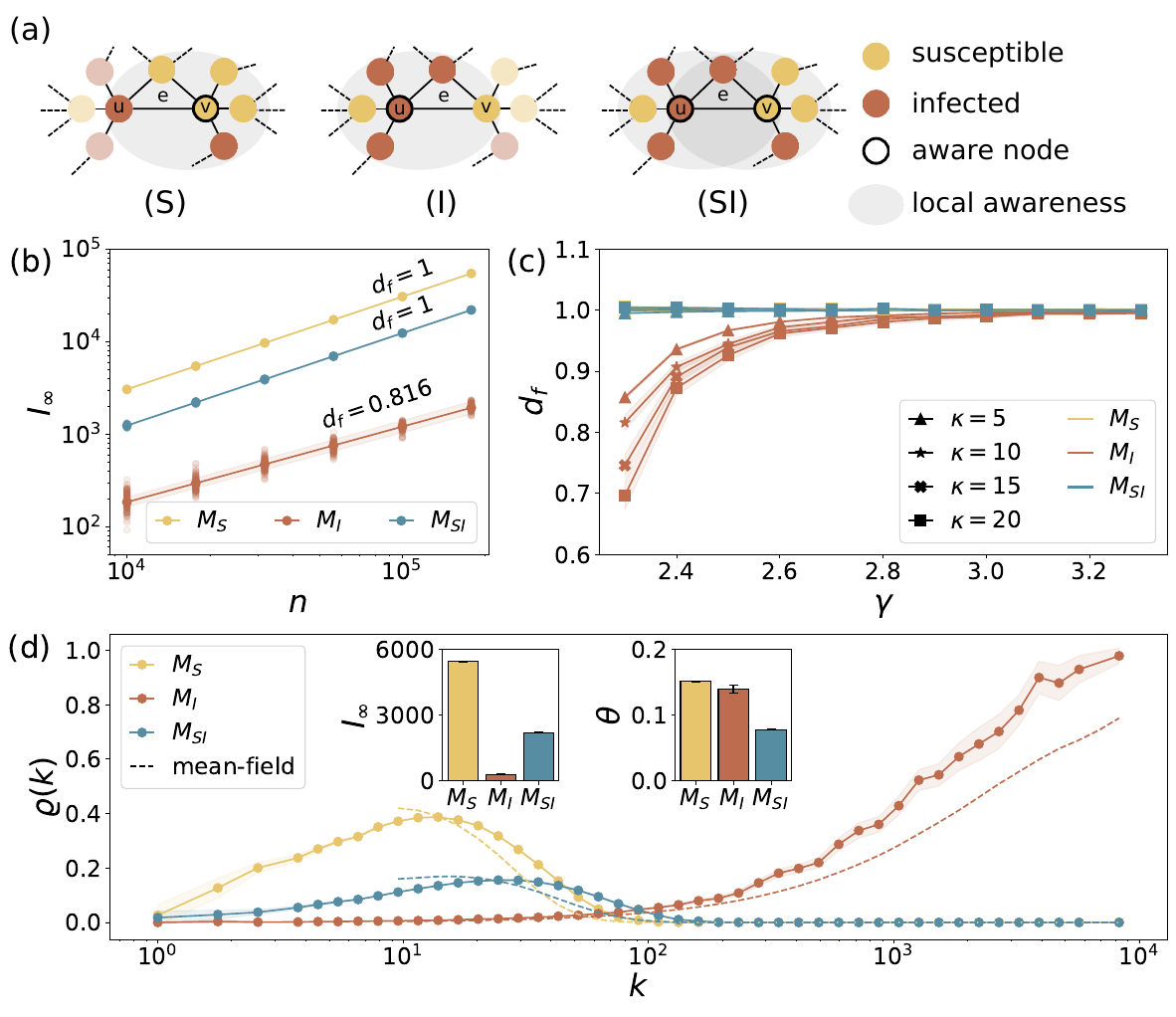}
\caption{(a) Illustration of the set of nodes (in shaded areas) counted in the exponents of the awareness functions in the (S)~S-aware $\so$, (I) I-aware $\io$ and (SI) SI-aware $\aw$ models. (b) The fractal dimension $d_f$ measured as the slope of the metastable epidemic size $I_{\infty}$ as a function of the network size $n$ in Chung-Lu networks with $\gamma = 2.3$ and $\kappa = 10$. (c) For degree exponent $\gamma < 3$, the fractal dimension of the $\io$ is smaller than 1, whereas the fractal dimensions of the $\so$ and the $\aw$ remain 1. Paradoxically, the infection becomes smaller in the $\io$ despite that more nodes are aware in the $\aw$. (d) The density of infection in the metastable state $\varrho(k)$ as function of the node degree $k$ both in simulations (solid) and in mean-field numerical approximations (dashed). Infection is concentrated on low-degree nodes in the $\so$ and $\aw$ models, while for $\io$, the high degree nodes dominate the infection. Inset (left) verifies that the metastable epidemic size is the lowest for $\io$. Inset (right) shows no paradox in the perceived infection density $\theta$, defined as the probability that a randomly chosen edge has an infected node on its end. 
}
\label{fig:2}
\end{figure}

\textit{Results.} Since more nodes are aware in the $\aw$  compared to $\so$ and $\io$, one would expect that $\aw$ would consistently lead to the lowest number of infected nodes. Interestingly, we find that this is not always the case.

To assess differences between the three models in a robust way, we measure the asymptotic growth of the metastable epidemic size $I_{\infty}$ as a function of the network size $n$. More precisely, we assume the polynomial relation $I_{\infty} \sim n^{d_f}$, and we are interested in the exponent $d_f$, also called the fractal dimension of the metastable state. Numerically, $d_f$ can be estimated by the slope of $I_{\infty}$ as a function of $n$ on a log-log plot (see Fig. \ref{fig:2}(b)).

In supercritical SIS models without awareness, the metastable infection size $I_{\infty}$ is known to be linear in the network size \cite{newman2018networks}, which means that the fractal dimension attains its maximum value ($d_f=1$). Fig. \ref{fig:2}(c) shows that the same is true for $\so$ and $\aw$, across all values of $\gamma$, suggesting that these awareness mechanisms are unable to asymptotically reduce the epidemic size. However, surprisingly, Fig. \ref{fig:2}(c) also shows that $\io$ has fractal dimension strictly less than 1, implying sublinear growth, for networks with $\gamma<3$. In this context, sublinearity means that for large enough networks, the density of infectious individuals in the stationary state can be lower than any fixed number, in particular, lower than the stationary infection densities in the $\so$ and $\aw$ models. This is a highly counter-intuitive result: even though less nodes are aware in $\io$, the epidemic becomes smaller compared to $\aw$, when all nodes reduce contacts (Fig.~\ref{fig:2}(d), left inset). Additionally, as the network density $\kappa$ increases, the fractal dimension further decreases in the $\io$ model, strengthening the paradoxical phenomenon. 

Fig. \ref{fig:2}(d) shows that the three awareness models lead to markedly different degree profiles among the infected nodes, hinting at an intuitive explanation of the paradox. In the $\so$ and $\aw$ models, the infection density is highest among low-degree nodes (degree between 10-50), and since they constitute the overwhelming majority of the network, the epidemic size is also high. In contrast, in the $\io$ model, the infection is concentrated on the high-degree nodes (degrees larger than 1000), and low-degree nodes remain sparsely infected, making the infection size significantly smaller. In other words, even though the $\aw$ model has the highest potential to reduce the epidemic size, due to the way the epidemic is distributed in the metastable state, awareness has a bigger impact in the $\io$ model. 


We observe no paradoxical behavior, if we evaluate the infection density ``perceived'' by the nodes, defined as the fraction of infected neighbors $\theta$ of a randomly chosen individual. Indeed, the second inset of Fig. \ref{fig:2}(d) shows that $\theta$ is smallest for the $\aw$ model. This observation suggests that in the $\io$ model, a few infected high-degree nodes increase the perceived epidemic density in their large neighborhoods, serving as ``warning examples'' and protecting the rest of the population from the disease. However, if these nodes are more interested in protecting themselves, as in the case of the $\aw$ and $\so$ models, then high-degree nodes do not become infected, and therefore do not alert their neighbors; instead they are essentially removed from the network. Although the removal of high-degree nodes does slow down the epidemic propagation compared to a null-model without awareness, the network between the low-degree nodes remains intact, and a constant fraction of nodes become infected in the stationary state. Eventually, the perceived epidemic density becomes comparable in all three behavior-epidemic models, however, this means a significantly higher number of infected low-degree nodes in the $\aw$ and $\so$ models, compared to the few infected hubs in the $\io$ model. 

We also observe no paradoxical behavior in the first few timesteps of the three awareness models, initialized from the same seed nodes. Instead, high degree nodes become infected first in all three models (Supplementary Section 6), and the epidemic grows slowest in the $\aw$ model (Fig. A.1(a)), as expected from the increased awareness level. However, these local dynamics quickly become dominated by the global metastable state, where an increased capacity (higher awareness potential in $\aw$) causes a worse overall performance (compared to the $\io$), similarly to Braess's paradox \cite{braess1968paradoxon} in traffic flows.

\textit{Theoretical results.} The observed awareness paradox and its intuitive explanation can be formalized and proved mathematically, confirming that it is a fundamental phenomenon in behavior-disease models. In this Letter, we focus on presenting the main results and insights, and for further details we refer the reader to Supplementary Sections 3-5.

We obtain theoretical results by making two kinds of approximations on the underlying network and the stochastic process. Firstly, we assume the network is annealed \cite{annealed}, namely, the adjacency matrix is replaced by the connection probabilities $w_{ij}$ modifying the awareness function $a$ to
\begin{align}
\label{eq:a_hat}
\hat{a}(i,t):=\exp \biggl(-\sum_{j}w_{ij} \hat{u}_{j}(t) \biggr),
\end{align}
where $\hat{u}_{j}(t)$ is the indicator of vertex $j$ being infected at time $t$. Analogously to Eq. \eqref{eq:betadef}, a susceptible vertex $i$ gets infected at rate $\beta_0 \hat{\phi}_{i}(t)$ in the annealed model, with  
\begin{align}
\label{eq:phi_hat}
    \hat{\phi}_{i}(t)= \hat{a}(i,t)^{\alpha_{S}} \sum_{j}w_{ij} \hat{a}(j,t)^{\alpha_{I}} \hat{u}_{j}(t).
 \end{align}

Secondly, we apply the N-intertwined mean-field approximation \cite{NIMFA} and assume that the states of the vertices are independent. The resulting closed system of ODEs can be expressed as 
\begin{align}
\label{eq:u}
 \frac{\mathrm{d}}{\mathrm{d} t}u_i(t)=-\mu u_i(t)+\beta_0 \phi_{i}(t) (1-u_i(t)),
 \end{align}
where the symbols without the hat symbol represent the mean field approximation of the expectation of the quantities with the hat symbol. More explicitly, $u_i(t)$ approximates $\mathbb{E}\left(\hat{u}_{i}(t) \right)$, and $\phi_{i}(t)$ is analogous to $\hat{\phi}_i(t)$ with  $\hat{u}_{i}(t)$ replaced by $u_i(t)$ in Eq. \eqref{eq:a_hat} and Eq. \eqref{eq:phi_hat}.

\begin{figure}[b]
\vspace{-1.5em}
\includegraphics[width=0.5\textwidth]{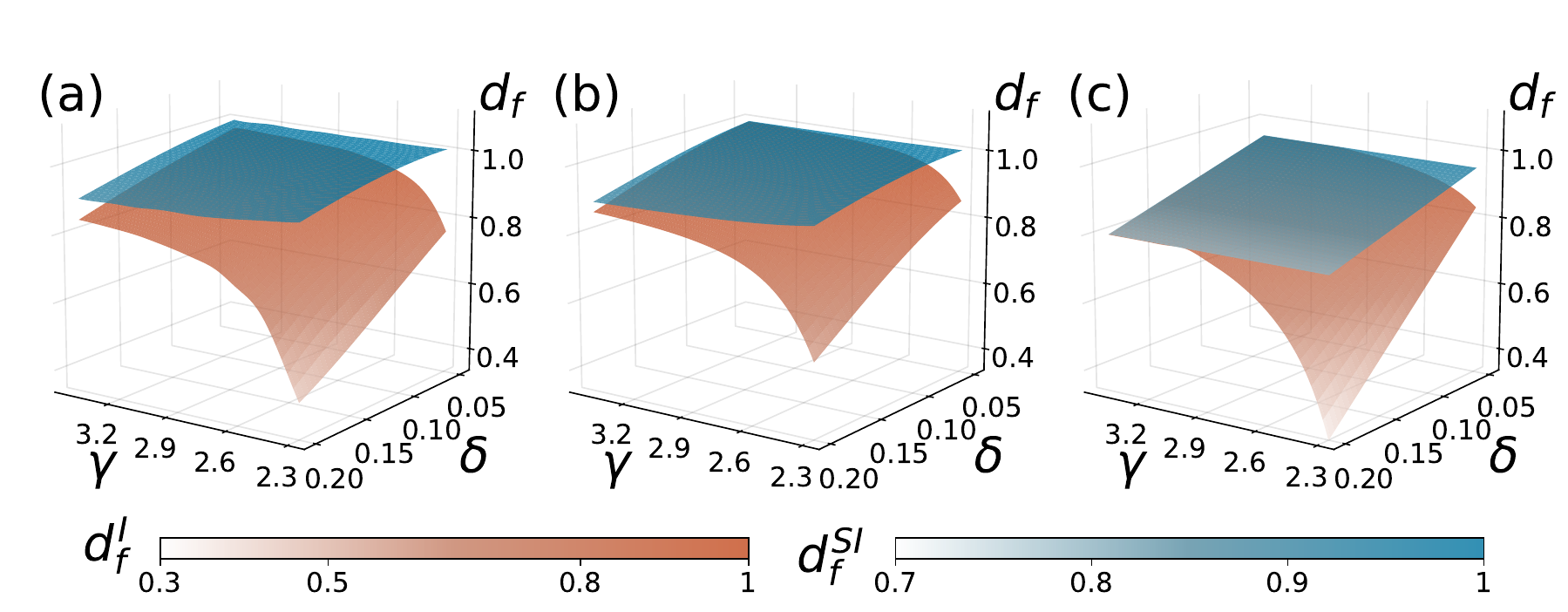} 
\caption{The fractal dimension $d_f$ of the infection size as a function of the degree distribution exponent $\gamma$ and the average degree exponent $\delta$ in the I-aware and the SI-aware models is in agreement in (a) stochastic simulations, (b) mean-field numerical results and (c) analytical results (Eq. \eqref{eq:d_asymp}). For $\gamma\in(2,3)$, the fractal dimension in the I-aware model is strictly smaller than in the SI-aware model in all three approaches. }
\label{fig:3}
\end{figure}

After solving Eq. \eqref{eq:u} (see derivation details at the end of this Letter, in the Appendices), setting degree exponent $\gamma \in (2,3)$ and density parameter $\kappa=n^{\delta}$ for some $0 \leq \delta \leq 1$, we arrive to our main theoretical result:
\begin{align}
\label{eq:d_asymp}
\begin{split}
d_f^\text{SI}=&1-\delta, \\
d_{f}^{\text{I}}=&1-\frac{1}{\gamma-2}\delta. 
\end{split}
\end{align}
Indeed, Eq. \eqref{eq:d_asymp} shows that the fractal dimension of the epidemic is strictly smaller in the $\io$ model compared to the $\aw$ model for heterogeneous ($\gamma \in (2,3)$) and dense ($\delta>0$) degree distributions, proving the paradox.

Fig. \ref{fig:3} confirms the agreement between the stochastic simulations, the numerical approximation and Eq. \eqref{eq:d_asymp}. These quantitative results reinforce our previous observation, that the paradox appears in its strongest form in dense and degree heterogeneous networks. For further illustration, in Supplementary Section 2 we show via simulations and theoretic analysis that the paradox is present in extremely heterogeneous star-like networks~too. 

\textit{Results on real networks.} After demonstrating and understanding our counter-intuitive result on synthetic Chung-Lu networks, we shift our focus towards real networks. We gathered 17 real social and computer networks from publicly available datasets (see Table A.1
). Stochastic simulations revealed that among these 17 networks, 5 exhibit the paradox (flickr follower, libmseti rating, livemocha friendship, marker cafe friendship, github mutual follower). Based on our results in synthetic networks, we expect that the two main network characteristics that distinguish these 5 networks from the other 12 are average degree and degree heterogeneity. Indeed, when we plot the ratio of the infection size in the two models 
\begin{equation}
\label{eq:Irat}
  I_{\text{rat}}=\frac{I_{\text{SI}}}{I_{\text{I}}+I_{\text{SI}}}
\end{equation}
as a function of the average degree and the standard deviation of the degree distribution, there is a clear separation between the 5 paradox-exhibiting and the 12 non-exhibiting networks (see Fig. A.4). However, if we employ a degree-preserving random shuffling of the edges, the paradox appears in 5 additional networks (see Fig. A.4
(b)), suggesting that network characteristics beyond the degree distribution play a role in the appearance of the paradox.

According to our previous results, the key mechanism responsible for the paradox is the infected hubs inducing awareness in low-degree nodes. However, many real networks are assortative, which means that low-degree nodes are less likely to connect to hubs, weakening the paradox. Indeed, plotting the ratio $I_{\text{rat}}$ in Fig. \ref{fig:4} as a function of the average degree $\langle k \rangle$ and the assortativity parameter $\xi$ (defined in the figure legend) we see a clear separation both for real (main) and synthetic networks (upper inset). This result suggests that disassortativty is a key network characteristic that contributes for the paradox to arise, although dense enough assortative  networks still exhibit the phenomenon (Fig. \ref{fig:4} upper inset).

Finding all network characteristics that may contribute to the paradox is outside the scope of this Letter. However, both the mathematical proof and the data analysis suggest that network density, disassortativity and degree heterogeneity are the three most important properties, while other characteristics (e.g. clustering coefficient) play a smaller role (see Table A.2
).

\begin{figure}[h]
\includegraphics[width=1\columnwidth]{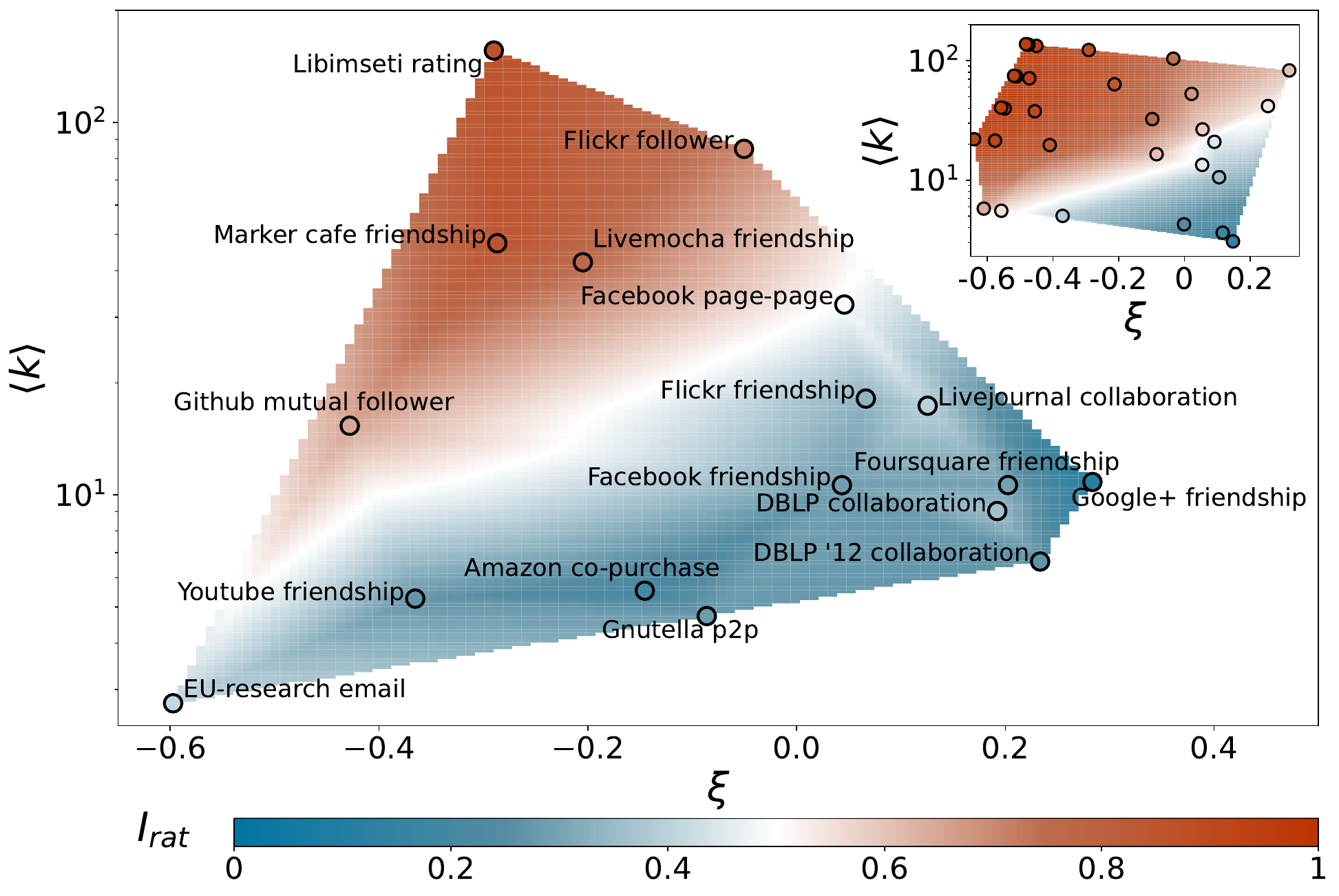}
\caption{The separation between paradox-exhibiting (red) and non-exhibiting (blue) real networks. The main plots shows $I_{\text{rat}}$, defined in Eq. \eqref{eq:Irat}, as a function of the average degree $\langle k \rangle$ and the degree assortativity $\xi$. We compute $\xi$ by fitting the exponent of $k_{\text{nn}}(k) = ck^{\xi}$, where $k_{\text{nn}}(k)$ is the average neighbor degree of nodes with degree $k$ \cite{correlationPastor}. If $\xi > 0$, the network is assortative; otherwise, it is disassortative.
The continuous surface is fitted on the data points via linear interpolation. The upper inset shows a similar behavior on synthetically generated Chung-Lu networks with tunable assortativiy (see in Supplementary Section 8).
}
\label{fig:4}
\end{figure}

\textit{Conclusion.} In this Letter, we demonstrated a highly counter-intuitive result about the interplay of disease propagation and awareness behavior; \textit{fewer} potentially aware nodes may be \textit{more} effective in containing an epidemic, due to a \textit{sublinear} population size dependency of stationary epidemic size, in case only infected nodes are aware.
This is a more effective strategy than \textit{any} other preventive measure, which permits a linear epidemic size (e.g. other types of local awareness, global awareness, government regulations, etc.). The observed paradox reinforces that awareness behavior has the greatest societal benefit when focused on protecting others, as shown for the epidemic threshold earlier \cite{funk2010modelling}. However, instead of thresholds, by focusing on the metastable epidemic size, we provide a very different explanation in nature.
We reveal that the paradox is induced by the interaction between human behavior and the disease distribution within the population, calling for epidemic behavioral surveillance efforts during pandemics beyond reporting the effective reproductive number or sheer case numbers.
 
Although our observations are motivated by surveys, and demonstrated by simulations and mathematical analysis, they have limitations. The implemented awareness models are only a crude approximation of reality. They assume that all individuals employ the same awareness function, while awareness may depend on individual features~\cite{xu2022impact}. A more detailed exploration of the interactions between human behavior and epidemics could be the target of future works. Nevertheless, the fact that the epidemic size is \textit{asymptotically} smaller in the I-aware model compared to the S and SI-aware models, with various real networks exhibiting similar behavior, suggests that the paradox is robust against our modelling choices. 

Further, the degree distribution of real contact networks is often more uniform than the assumed power-law scaling. Moreover, the information about the disease may not be transmitted on the same contact network where the disease spreads. Nevertheless, we believe the paradox will persist in multiplex networks with a more homogeneous contact layer and a scale-free information layer \cite{granell2013dynamical,massaro2014epidemic,kan2017effects}, a topic which remains for future research.

Perhaps our most important message is the delicate role of influential nodes on the long-term dynamics of an epidemic. Asymmetries in the local update rule and the presence of hubs have already been shown to produce surprising results in case of the voter model~\cite{reverse_voter_model}. In epidemic models, Zhang \emph{et al.}~\cite{zhang2014suppression} reported that influential nodes can act as a ``double-edged sword'' by either speeding up or slowing down the disease propagation depending on whether they transmit the disease or the information about the disease at a faster rate. We complement these findings by understanding how different motivations to engage in awareness behavior change the role of influential nodes. We show that if hub nodes are primarily interested in protecting others, they are able to increase the perceived infection density in the population, leading to an awareness response that controls the epidemic more effectively than in scenarios where hubs are primarily motivated by self-protective behaviors. We believe that besides a deeper mathematical understanding of the phenomenon, this finding will inspire future modelling, and it will guide information campaigns in epidemic prevention and crisis management.

\textit{Acknowledgment.} We acknowledge Z. Király, B. Ráth, M. Abért, and L. Lovász for their scientific contributions and thank J. Koltai and G. Röst for their contribution to the MASZK data collection. CS.K. and M.K. were supported by National Laboratory for Health Security (RRF-2.3.1-21-2022-00006); G.Ó. by the Swiss National Science Foundation (P500PT-211129); M.K. by the CHIST-ERA project (SAI: FWF I 5205-N); the SoBigData++ H2020-871042; and the SoBigData-PPP HORIZON-INFRA-2021-DEV-02 program (101079043).

\vspace{20px}

\section*{Appendix} In this section, we provide additional insights for specialists on our theoretical results. 

For sequences $a_n$ and  $b_n$, we use the notations $a_n \sim b_n$ to denote $a_n/b_n=1+o(1)$, and $a_n \asymp b_n$ to denote $a_n=\Theta(b_n)$, (big $\Theta$ is the notation of asymptotical behavior should not be confused with small $\theta$, our notation for the perceived infection density). Whenever we take limits, we first let $n \to \infty$ and then let $\kappa \to \infty.$  Since the paradox only involves the $\aw$ and $\io$ models we assume $\alpha_I>0$ in the analytic derivations.

We are interested in finding the non-zero steady state of Eq. \eqref{eq:u}, which we prove is unique in Lemma A.1., and denote by $u_i$. Interestingly, to solve this system of ODEs, we first have to understand the infection density perceived by the nodes, or in other words, the probability of a randomly selected stub being infected in the mean-field model, defined by
$$\theta= \frac{1}{D}\sum_{i} d_{i} u_i.$$

Solving the system of mean-field equations (detailed derivation in Lemmas A.3, A.4 and Remark A.1),
the asymptotic behavior of $\theta$ can be expressed as

\begin{align}
\label{eq:theta_sim}
\theta \sim \frac{1}{\alpha_{S}+\alpha_{I}} \frac{\log \kappa}{ \kappa}.
\end{align}
This result agrees with the second inset of Fig. \ref{fig:2}(d), and with the intuition that a higher $\alpha_{S}+\alpha_{I}$ should result in a lower infection density at least in the ``perceived'' sense. Next, we show that while the perceived infection density is a good proxy for the epidemic size in certain cases, the epidemic size can be significantly lower in others, inducing the paradox. 

When mainly low degree vertices are infected, the normalized epidemic size $I_{\infty}/D$ and the perceived epidemic density $\theta$ are comparable. In fact, $\theta$ upper bounds $I_{\infty}/D$~as
\begin{equation}
\label{eq:theta_I}
    \theta= \frac{1}{D} \sum_{i} d_i u_i \geq \frac{1}{D} \sum_{i} u_i=\frac{1}{D}I_{\infty}.
\end{equation}
In Eq. \eqref{eq:theta_I}, equality holds if and only if $u_i=0$ for all $i$ larger than one, i.e., if only the lowest degree nodes are infected. As we saw in Fig. \ref{fig:2}(d), this is approximately true in $\aw$, suggesting that for this awareness model, inequality Eq. \eqref{eq:theta_I} is asymptotically sharp. We formalize this intuition (see details in Lemma A.10
) mathematically~as 
 \begin{align}
 \label{eq:I_aa_sim}
 I^{\text{SI}}_\infty \sim D \theta_{\text{SI}} \asymp  n \theta_{\text{SI}} .
 \end{align}

In contrast, in the $\io$ model, the infection probabilities $u_i$ increase for larger degrees, potentially making the inequality Eq. \eqref{eq:theta_I} loose, and therefore $I^{\text{I}}_\infty$ asymptotically smaller than $I^{\text{SI}}_\infty$. Indeed, we show that the tightness of inequality Eq. \eqref{eq:theta_I} undergoes a phase transition in the degree exponent, and the critical threshold is at $\gamma=3$, where the second moment of the degree distribution becomes infinite. More precisely (derivation in Remark A.2
), we can write the asymptotic behavior of the number of infected individuals in model $\io$ as
\begin{align}
\label{eq:I_io_asmyp}
I^{\text{I}}_\infty \asymp \begin{cases}
                        n\theta_{\text{I}}  \ \ &\text{if  $3<\gamma$}, \\
                        n\theta_{\text{I}}^{\frac{1}{\gamma -2}} \ \ &\text{if $2<\gamma<3$} .
                    \end{cases}
 \end{align}

Since $\theta_{\text{I}} \asymp \theta_{\text{SI}} \to 0$ as $\kappa \to \infty$, $I^{\text{I}}_\infty$ is indeed asymptotically smaller than $I^{\text{SI}}_\infty$ in the range $\gamma \in (2,3)$, in particular, for large enough networks we have $I^{\text{I}}_\infty<I^{\text{SI}}_\infty$, proving the paradoxical result. 

Moreover, we are able to quantify the fractal dimension in the $\io$ and $\aw$ models as a function of the degree exponent and the average degree. Setting $\gamma \in (2,3)$ and $\kappa=n^{\delta}$ for some $0 \leq \delta \leq 1$, and combining equations Eq.~\eqref{eq:theta_sim}, Eq.~\eqref{eq:I_aa_sim} and Eq.~\eqref{eq:I_io_asmyp}, we arrive to our main theoretical result expressing the fractal dimensions in Eq.~\eqref{eq:d_asymp}.


\bibliography{apssamp}


\onecolumngrid
\newpage
\appendix 
\counterwithin{figure}{section}
\counterwithin{table}{section}
\counterwithin{lemma}{section}
\counterwithin{theorem}{section}
\counterwithin{remark}{section}

\begin{center}
\noindent \textbf{\LARGE Supplementary Material} \\ \vspace{.2in}

\noindent  \textbf{\Large Epidemic paradox induced by awareness driven network dynamics} \\ \vspace{.2in}

{\large Csegő Balázs Kolok, Gergely \'Odor, Dániel Keliger and M\'arton Karsai\textsuperscript{*} \\ \vspace{.1in}
\small Corresponding author: karsaim@ceu.edu}
\end{center}

\section{}
\label{s:supplementary}

\subsection{Models in details}
In the main script, we use the SIS epidemic model combined with local awareness to model epidemic spreading on networks. In this paper we focus on the awareness function $a(v,t) =  e^{- N_I(v,t)}$, where $N_I(v,t)$ denotes the absolute number of infected neighbors of node $v$ at time $t$ (absolute model). This awareness function has already been studied by Ref. \cite{zhang2014suppression}. In a competing approach, the proportion of infected neighbors is used instead of the absolute number \cite{wu2012impact, LI2020110090, Paarporn, Bagnoli061904}  (proportional model). Our choice to use the absolute model has multiple motivations. First, we aim to capture a scenario where each neighbor may alert node $v$ about their infectious status, and each of these independent alerts cause the same reduction in the infection probability. This criterion is also satisfied by the most common way to model local awareness -- as a social contagion process spreading on the same or a related (information) network as the epidemic process \cite{funk2009spread,perra2011towards,kiss2010impact}. In the social contagion model, the probability staying unaware decreases exponentially in the absolute number of aware neighbors. While we do not adopt the social contagion mechanism in this paper for the sake of mathematical tractability, we chose the absolute model to match this exponential decrease. Second, we observed that the paradox does not appear in the proportional model (plots and calculations not shown), further motivating our focus on the more interesting absolute model. Competing definitions of fractional  \cite{watts2002simple} and absolute  \cite{granovetter1978threshold} threshold models are also provided to describe complex contagion phenomena, leading to different behaviors in their outcome.

Our agent-based model updates every individual's epidemic state during the simulations for every $t$ timestep. We use two methods to simulate the SIS spreading: stochastic simulations and numerical simulations. On the other hand, we have analytical calculations to prove our simulation results. In every case, we have fixed the spreading rate $\beta_0=2 \mu$ with $\beta_0=0.6$ (although this assumption can be easily relaxed by setting a sufficiently large $\kappa$).

\subsubsection{Stochastic simulations}
\begin{description}
    \item[Initialization] \hspace{0em}
    \newline
    \begin{enumerate}
    \vspace{-1.5em}
        \item Fix the network size $n$, power-law exponent $\gamma$ and density parameter $\kappa$. 
        \item Assign weights $d_i=\left(\frac{i}{n}\right)^{-\frac{1}{\gamma-1}}$ to every node $i \in \{1,\dots,n\}$ \cite{fasino2021generating}.
        \item Generate a scale-free network with power-law degree distributions of exponent $\gamma$ via drawing an undirected edge between nodes $i$ and $j$ with probability $\kappa \frac{d_i d_j}{D}$, where $D=\sum_i d_i$.
        \item Initialize the epidemic states choosing independently random $10$ nodes with uniform chances as infected, the rest is susceptible. 
    \end{enumerate}
    \item[Spreading process]\hspace{0em}
    \newline
    \begin{itemize}
    \vspace{-1.5em}
        \item Take timesteps until the standard deviation of the infection density in the last $200$ timestep drops below $0.05$ (this is true at around $1000$ iteration see Fig. \ref{fig:metastate}).
        \item In timestep $t$, update the infectious state of every node. 
        \item If node $u$ is susceptible, then every infected neighbor $v$ infects $u$ by a probability $\beta(u,v,t)=\beta_0 a(u,t)^{\alpha_S} a(v,t)^{\alpha_I}$, which depends on the awareness model we are using ($\so, \io, \aw$) by setting $\alpha_S$ and $\alpha_I$ appropriately. These infections happen independently random from each other, if at least one neighbour infects $u$ then it changes its viral state to infected at the next timestep.
        \item If node $u$ is infected, it recovers and becomes susceptible in the next timestep with probability $\mu$ otherwise it remains infected.
    \end{itemize}
\end{description}

If the network is already provided, the generation phase is unnecessary, and the first three steps of the initialization process can be skipped.
\begin{figure}[htb!]
    \centering
    \includegraphics[width=0.7\textwidth]{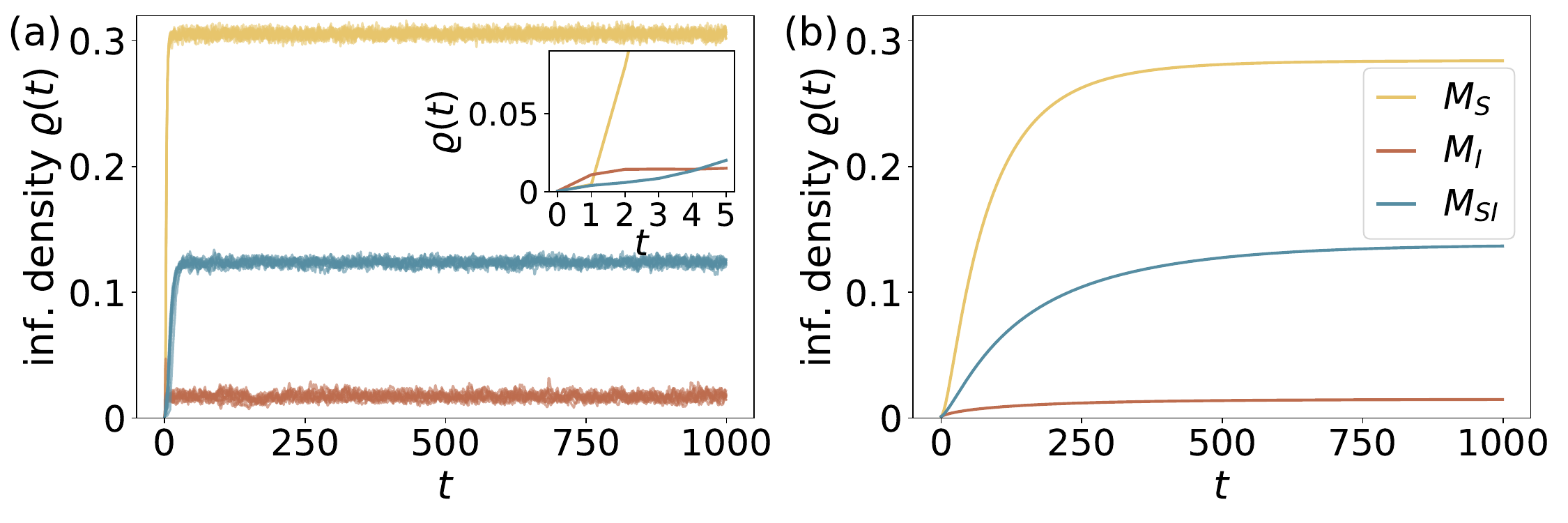}
    \caption{
    Infection density over time for different models: (a) Stochastic simulations show that the infection stabilizes rapidly, presenting the paradox, as the $\io$ model reaches the lowest infection density. The inset provides a closer view of the initial spreading phase, where the $\io$ model starts with the highest density, followed by $\aw$, and $\so$. After a few steps, the order reverses. (b) The numerical solution stabilizes around $t=1000$. All presented populations consist of $10^4$ nodes.}
    \label{fig:metastate}
\end{figure}

\subsubsection{Numerical simulations}
\begin{description}
    \item[Initialization]\hspace{0em}
    \newline
    \begin{enumerate}
    \vspace{-1.5em}
        \item Fix the number of nodes $n$, power-law exponent $\gamma$ and density parameter $\kappa$.
        \item Assign weights $d_i=\left(\frac{i}{n}\right)^{-\frac{1}{\gamma-1}}$ to every node $i \in \{1,\dots,n\}$.
        \item Initialize the $n$ dimensional infectious probability vector $u$ by choosing independently random $10$ nodes with uniform chances as infected (set $u_i=1$ for these), the rest is susceptible (set $u_i=0$ for the rest). 
    \end{enumerate}
    \item[Spreading process] \hspace{0em}
    \newline
    \begin{itemize}
    \vspace{-1.5em}
        \item Take timesteps until the maximum deviation of the infection density from the mean infection density in the last $200$ timestep drops below $0.001$ (this is true at around $1000$ iteration see Fig. \ref{fig:metastate}).
        \item In timestep $t$, update the infectious probability vector assigned to the population by taking a Runge-Kutta step of the differential Eq. (4) described in the main text. 
    \end{itemize}
\end{description}

\subsection{Special case: star-like weighted network}
\label{sec:star-like}

In this section, we demonstrate the role of the hubs in the paradox phenomena via a star-like weighted network. Node $i=1$ plays the role of the hub with weights $w_{11}=0, w_{1i}=1 \ (i>1).$ The rest of the network is homogeneously mixed, that is $w_{ij}=\frac{1}{n-1} \ (i,j>1 \ \textit{including $i=j$ for the sake of simplicity.})$ The ratio of infections within the non-hub nodes are denoted by $\hat{\rho}(t):=\frac{1}{n-1}\sum_{i>1} \hat{u}_i(t) \approx \frac{I(t)}{n}.$ 

In this setup $\hat{\phi}_i(t)$ takes the form of
\begin{align}
\hat{\phi}_1(t)=&(n-1)e^{-\alpha_s (n-1) \hat{\rho}(t)}e^{-\alpha_I(\hat{u}_1(t)+\hat{\rho}(t))} \hat{\rho}(t)), \\
\begin{split}
(i>1) \ \hat{\phi}_{i}(t)=&e^{-\alpha_I(\hat{u}_1(t)+\hat{\rho}(t))} \left(e^{-\alpha_I(n-1)\hat{\rho}(t)}\hat{u}_1(t) \right. \\
&\left.+ e^{-\alpha_I (\hat{u}_1(t)+\hat{\rho}(t))} \hat{\rho}(t) \right) \\
&\approx e^{-(\alpha_I+\alpha_S)(\hat{u}_1(t)+\hat{\rho}(t))} \hat{\rho}(t),
\end{split}
\end{align}
where the approximations work well if the infection levels $\bar{\rho}(t)$ are at least $cn$ for some positive $c$.

When $\alpha_S=0, \alpha_I>0$ then whenever node $i=1$ is susceptible it gets infected at rate $\beta_0 \hat{\phi}_1(t) \asymp n$ which means it gets infected quickly, thus we may assume $\hat{u}_1(t)=1$ for most of $t$ making the effective infection rate for non-hub nodes $i>1$ 
$$\beta_0 \hat{\phi}_i(t) \approx \beta_0 e^{-\alpha_I(1+\hat{\rho}(t))} \hat{\rho}(t)=:\beta_{I}(\hat{\rho}(t)).$$

Meanwhile, if we assume $\alpha_S=\alpha_I>0$ then the susceptible hub $i=1$ gets infected at rate $\beta_0 \hat{\phi}_1(t) \leq \beta_0 (n-1)e^{-\alpha_S(n-1)\hat{\rho}(t)} \to 0$, hence, with large probability it remains susceptible for long time intervals so we may set $\hat{u}_1(t)=0.$ The effective infection rates for the rest of the nodes $i>1$ then becomes
$$\beta_0 \hat{\phi}_i(t) \approx \beta_0 e^{-2\alpha_I\hat{\rho}(t)} \hat{\rho}(t)=:\beta_{SI}(\hat{\rho}(t)).$$

Since for all $0<\hat{\rho}<1$ we have that $\alpha_I(1+\hat{\rho})>2\alpha_I \hat{\rho}$ this means that the effective infection rate are $\beta_{I}(\hat{\rho})<\beta_{SI}(\hat{\rho})$ for such values.

In the limit $n \to \infty$ the infection levels are described by the deterministic dynamics
\begin{align*}
 \frac{\d}{\d t} \rho_{I}(t)=&\beta_I(\rho_I(t))(1-\rho_I(t))\rho_{I}(t)-\mu \rho_{I}(t),   \\
 \frac{\d}{\d t} \rho_{SI}(t)=&\beta_{SI}(\rho_{SI}(t))(1-\rho_{SI}(t))\rho_{SI}(t)-\mu \rho_{SI}(t).
\end{align*}

If we start from equal initial conditions $0<\rho_{SI}(0)=\rho_{I}(0)<\varepsilon$ then we must have $\rho_{I}(t) < \rho_{SI}(t)$ for later times $t > 0$. This is simply because $0<\frac{\d}{\d t} \rho_{I}(0)<\frac{\d }{\d t}\rho_{SI}(0)$ and whenever $\rho_{SI}(t)-\rho_{I}(t)>0$ is small, then the derivatives are $\frac{\d}{\d t} \rho_{I}(t)<\frac{\d }{\d t}\rho_{SI}(t)$, so $\rho_{I}(t)$ can never surpass $\rho_{SI}(t).$

To support our theoretical calculations, we ran stochastic simulations on a star-like network. In the simulations, two hubs are required to maintain the stability of local awareness impact. Additionally, an average degree large enough is needed to enable the epidemic to spread within the non-hub nodes. Thus, to create the network we started with a random regular network of degree 10 and added two nodes as hubs, connecting them to every other node in the network. In this straightforward setting, the paradox can be observed (see Figure \ref{fig:star}). 

\begin{figure}[ht!]
    \centering
    \includegraphics[width=0.35\textwidth]{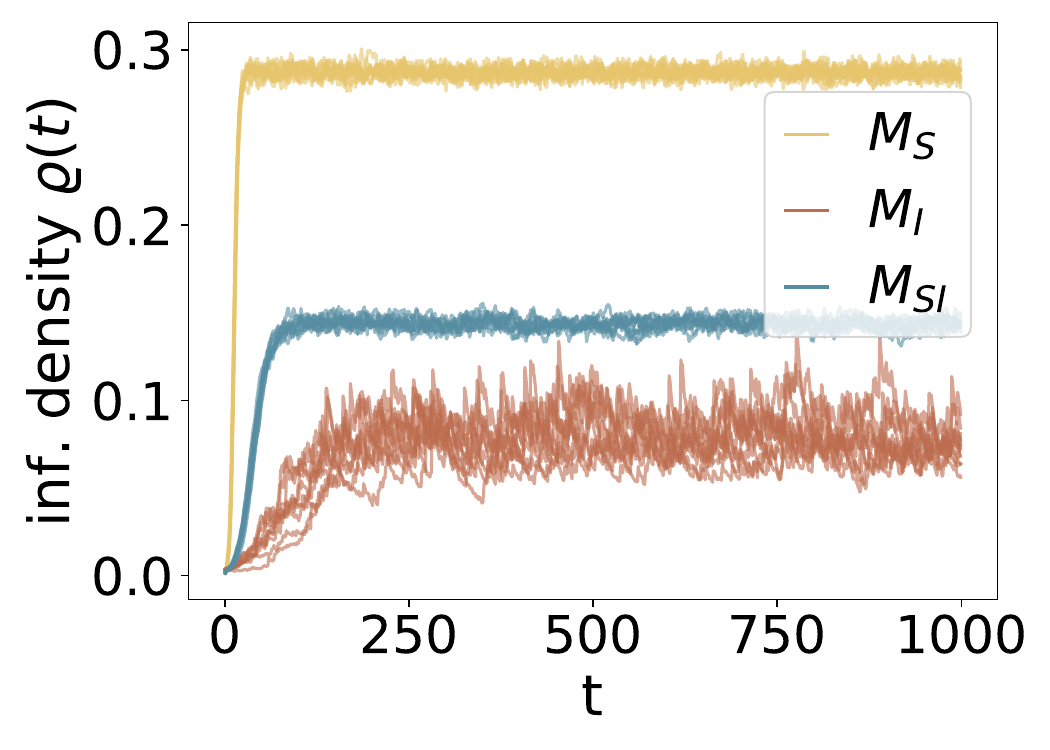}
    \caption{Infection density over time in $10$ runs on a star-like network: a random regular network (degree $10$, size $10^4$) with two hub nodes connected to all others. The $\io$ model paradoxically stabilizes at the lowest infection density in the metastable state.}
    \label{fig:star}
\end{figure}

\subsection{Mean-field equations}

For the sake of compactness we will use the notation $\tau:=\frac{1}{\gamma-1}$ $(0<\tau<1)$ for the exponent of 
$$d_i=\left( \frac{n}{i} \right)^\frac{1}{\gamma-1}=\left( \frac{n}{i} \right)^\tau.$$ Note that the first moment is finite when $0<\tau<1$ and the second moment is infinite for $\frac{1}{2} \leq \tau <1.$

Using Eq. (2) and Eq. (3) the stationary solution $u_i$ must satisfy the following set of equations:

\begin{align}
u_{i}=& \frac{\beta_0 \phi_i}{\mu +\beta_0\phi_i}=:\zeta(\phi_i), \\
\phi_{i}:=& \kappa \eta d_i e^{-\alpha_S \kappa \theta d_i}, \\
\label{eq:theta}
\theta:=&\frac{1}{D} \sum_{i} d_i u_i, \\
\label{eq:eta}
\eta:=&\frac{1}{D} \sum_{i} d_i u_ie^{-\alpha_I \kappa \theta d_i} .
\end{align}
We will think about $u_i, \phi_i$ as a function of $\eta, \theta$: $u_i(\eta,\theta), \phi_{i}(\eta, \theta),$ hence, there are two free parameters that should satisfy the constancy criteria Eq. \eqref{eq:theta}, Eq.  \eqref{eq:eta}.

We start with Eq. \eqref{eq:theta}. Fix and arbitrary $0 \leq \theta \leq 1$. Since $u_i(0,\theta)=0$ and $\eta \mapsto u_i(\eta, \theta)$ is strictly increasing, there must be a unique $\eta=\eta(\theta)$ satisfying
\begin{align}
\label{eq:theta_consistant}
\theta=\frac{1}{D}\sum_{i} d_i u_{i} \left(\eta(\theta),\theta \right).
\end{align}

Now, divide the right hand side of Eq.  \eqref{eq:eta} by $\eta(\theta)$ to define
\begin{align}
\label{eq:G}
G(\theta):= \frac{1}{D}\sum_{i} d_i \frac{u_{i} \left(\eta(\theta),\theta \right)}{\eta(\theta)}e^{-\alpha_I \kappa \theta d_i}.
\end{align}
Clearly, Eq. \eqref{eq:eta} is satisfied if and only if $G(\theta)=1.$

\begin{lemma}
There is a unique $0<\theta<1$ satisfying $G(\theta)=1.$
\end{lemma}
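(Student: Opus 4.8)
The plan is to show that $G$ from \eqref{eq:G} is continuous and \emph{strictly decreasing} on $(0,1)$ with $G(0^+)>1>G(1^-)$; the intermediate value theorem then produces a root, and strict monotonicity makes it unique. Throughout I use that for each fixed $\theta$ the map $\eta\mapsto\frac{1}{D}\sum_i d_i u_i(\eta,\theta)$ is a continuous strictly increasing bijection of $[0,\infty)$ onto $[0,1)$, so the solution $\eta(\theta)$ of \eqref{eq:theta_consistant} is well defined, of class $C^1$, and strictly increasing: writing the constraint as $F(\eta(\theta),\theta)=\theta$ with $F(\eta,\theta)=\frac{1}{D}\sum_i d_i u_i(\eta,\theta)$, we have $\partial_\eta F>0$ and $\partial_\theta F\le 0$ (the latter coming from the factor $e^{-\alpha_S\kappa\theta d_i}$), whence $\eta'(\theta)=(1-\partial_\theta F)/\partial_\eta F>0$. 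Continuity of $G$ then follows since it is a finite sum of continuous functions of $(\eta(\theta),\theta)$.

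First I would recast $G$ in a form that trivializes the boundary analysis. Since $\phi_i=\kappa\eta d_i e^{-\alpha_S\kappa\theta d_i}$ and $\zeta(\phi)=\beta_0\phi/(\mu+\beta_0\phi)$, one has $u_i/\eta=\kappa d_i e^{-\alpha_S\kappa\theta d_i}\,\beta_0/(\mu+\beta_0\phi_i)$, and using $1-u_i=\mu/(\mu+\beta_0\phi_i)$ this becomes $u_i/\eta=\frac{\beta_0}{\mu}\kappa d_i(1-u_i)e^{-\alpha_S\kappa\theta d_i}$, so \eqref{eq:G} collapses to
\[ G(\theta)=\frac{\beta_0\kappa}{\mu D}\sum_i d_i^2\,(1-u_i)\,e^{-(\alpha_S+\alpha_I)\kappa\theta d_i}. \]
As $\theta\to0^+$ we have $\eta(\theta)\to0$ and every $u_i\to0$, so $G(0^+)=\frac{\beta_0}{\mu}\cdot\frac{\kappa\sum_i d_i^2}{D}$; this is the basic reproduction number of the linearized annealed dynamics (equivalently $\beta_0/\mu$ times the spectral radius of the rank-one matrix $(w_{ij})$) and exceeds $1$ in the supercritical regime, the only regime in which a non-zero steady state exists. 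As $\theta\to1^-$, the constraint $\frac{1}{D}\sum_i d_i u_i=\theta\to1$ forces $u_i\to1$ for all $i$, hence $\eta(\theta)\to\infty$ and $G(1^-)=0$. This yields existence.

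The \emph{main obstacle} is strict monotonicity, because the individual $u_i$ are \emph{not} monotone in $\theta$: raising $\theta$ suppresses $\phi_i$ for the hubs through $e^{-\alpha_S\kappa\theta d_i}$ while the compensating growth of $\eta$ inflates $\phi_i$ for the low-degree nodes, so a naive term-by-term bound fails. Instead I would differentiate directly. Writing $B_i:=\frac{\beta_0}{\mu}(1-u_i)^2\kappa d_i e^{-\alpha_S\kappa\theta d_i}\ge0$ and using $\zeta'(\phi_i)=\beta_0(1-u_i)^2/\mu$, one gets $\frac{\mathrm{d}u_i}{\mathrm{d}\theta}=B_i(\eta'(\theta)-\alpha_S\kappa d_i\eta)$. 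The decisive simplification is the identity $\eta B_i=\frac{\beta_0}{\mu}(1-u_i)^2\phi_i=(1-u_i)^2\frac{u_i}{1-u_i}=u_i(1-u_i)$, which exactly cancels the dangerous hub-suppression contribution. Substituting into the derivative of the displayed form of $G$ gives
\[ -\frac{\mathrm{d}G}{\mathrm{d}\theta}=\frac{\beta_0\kappa}{\mu D}\sum_i d_i^2\, e^{-(\alpha_S+\alpha_I)\kappa\theta d_i}\Big[\,B_i\,\eta'(\theta)+\kappa d_i(1-u_i)\big(\alpha_S(1-u_i)+\alpha_I\big)\Big]. \]
Every summand is nonnegative, and since $\alpha_I>0$ and $u_i<1$ the bracket is strictly positive, so $G$ is strictly decreasing. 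Combined with continuity and $G(0^+)>1>G(1^-)$, this delivers the unique $\theta\in(0,1)$ with $G(\theta)=1$, completing the proof.
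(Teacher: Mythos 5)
Your proof is correct, and its skeleton coincides with the paper's: both establish $\eta'(\theta)>0$ by implicit differentiation of the consistency constraint, show $G$ is continuous and strictly decreasing, pin down the boundary values, and conclude by the intermediate value theorem. Where you genuinely diverge is the monotonicity step. The paper treats $G(\theta)$ as a weighted average of the terms $\frac{u_i}{\eta}e^{-\alpha_I\kappa\theta d_i}$, differentiates each one, and closes the sign argument via concavity of $\zeta$ and the mean value theorem ($u_i=\zeta(\phi_i)-\zeta(0)=\zeta'(\xi)\phi_i>\zeta'(\phi_i)\phi_i$ makes every bracketed term negative). You instead exploit the specific fractional-linear form of $\zeta$ to rewrite $u_i/\eta=\frac{\beta_0}{\mu}\kappa d_i(1-u_i)e^{-\alpha_S\kappa\theta d_i}$, collapse $G$ to $\frac{\beta_0\kappa}{\mu D}\sum_i d_i^2(1-u_i)e^{-(\alpha_S+\alpha_I)\kappa\theta d_i}$, and cancel the hub-suppression contribution through the identity $\eta B_i=u_i(1-u_i)$; I checked the resulting derivative formula and it is right (in fact your bracket is strictly positive even without $\alpha_I>0$, since $B_i\eta'>0$ already suffices). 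The trade-off: the paper's argument uses only $\zeta(0)=0$, $\zeta'>0$, $\zeta''<0$, so it survives replacing $\zeta$ by any concave response function, whereas your algebraic route is tied to this particular $\zeta$ but buys a sharper boundary analysis — you obtain the exact limit $G(0^+)=\frac{\beta_0\kappa}{\mu D}\sum_i d_i^2$ and correctly identify that existence requires this reproduction-number-type quantity to exceed $1$. The paper instead asserts that $\eta(0)=0$ "immediately implies" $G(0)=\infty$, which is not literally true at fixed $n$ and $\kappa$ (the $0/0$ limit is the finite value you computed) and only becomes accurate in the paper's asymptotic regime. One small loose end on your side: you leave $G(0^+)>1$ as a supercriticality hypothesis, but in this paper's setting it is verifiable in one line — since $\beta_0=2\mu$ and $d_i\geq 1$ gives $\sum_i d_i^2\geq D$, one has $G(0^+)\geq 2\kappa>1$ for $\kappa>1/2$ — so you could state the lemma unconditionally as the paper does.
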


\begin{proof}
 It is easy to check that $\eta(0)=0$ and $\eta(1)=\infty$ which immediately implies $G(0)=\infty$ and $G(1)=0$. Therefore, it is enough to show that $G$ is strictly monotone decreasing.
\begin{align}
\label{eq:phi_derivatives}
\begin{split}
\frac{\partial \phi_i}{\partial \theta}=&-\alpha_S \kappa d_i \phi_i \\
\frac{\partial \phi}{\partial \eta}=&\frac{\phi_i}{\eta} \\
\frac{\d u_i}{\d \theta}=&\zeta'(\phi_i) \phi_i \left( \frac{\eta'}{\eta}-\alpha_S \kappa d_i \right)
\end{split}
\end{align}

Next, we show that $\eta'>0$. Introduce the notation $$\lambda_i:=\frac{1}{D}\zeta'(\phi_i)\phi_i d_i.$$ 
Since $\zeta'>0$ we also get $\lambda_i>0.$ Differentiation both sides of \eqref{eq:theta_consistant} results in
\begin{align*}
1=&\frac{1}{D}\sum_{i}d_i \zeta'(\phi_i)\phi_i \left( \frac{\eta'}{\eta}-\alpha_S \kappa d_i \right), \\
\eta'=&  \frac{1+\alpha_S \kappa \sum_{i}\lambda_i d_i}{\sum_i \lambda_i}\eta>0.
\end{align*}
$G(\theta)$ can be interpreted as the average of the $\frac{u_i}{\eta}e^{-\alpha_S \kappa \theta d_i}$ terms, thus, it is enough to show that each term has a negative derivative.

\begin{align*}
\frac{\d}{\d \theta} \left( \frac{u_i}{\eta}e^{-\alpha_S \kappa \theta d_i} \right)=& \frac{1}{\eta^2} \left[\left(\zeta'(\phi_i)\phi_i \left(\frac{\eta'}{\eta}-\alpha_S \kappa d_i\right)e^{-\alpha_S \kappa \theta d_i}-\alpha_I \kappa d_i u_i e^{-\alpha_S \kappa \theta d_i} \right) \eta-u_ie^{-\alpha_S \kappa \theta d_i} \eta' \right] \\
=& \frac{e^{-\alpha_S \kappa \theta d_i}}{\eta^2} \left[ \left(\zeta'(\phi_i)\phi_i-u_i \right) \eta'-\left(\alpha_S \zeta'(\phi_i)\phi_i+\alpha_I u_i \right)\kappa d_i  \right]
\end{align*}
Note that $\zeta''<0$, hence using the mean value theorem there is a $\xi \in [0, \phi_i]$ such that
$$u_i=\zeta(\phi_i)-\underbrace{\zeta(0)}_{=0}=\zeta'(\xi)\phi_i >\zeta'(\phi_i) \phi_i$$
making all the terms in the bracket negative.
\end{proof}

The main goal is to give bounds on 
\begin{align}
I:= \sum_{i}u_i,
\end{align}
the total amount of infections, and to show that for large enough fixed $\kappa$ we have $I_{\text{I}} <I_{\text{SI}}$ when $n$ is large enough where the subscripts denote the type of awareness model.

Note that strictly speaking, the theorems are not guaranteed to hold for $\kappa=n^{\delta}$ as the $\lim_{\kappa \to \infty} \lim_{n \to \infty}$ limit does not specify how fast $\kappa=\kappa(n)$ could grow. Nevertheless, we carry on with the calculations noting that the formulas may break down for larger $\kappa$ values.

\subsection{ Calculating $\theta$}

The goal of this subsection is to prove Eq. (7) of the main text.
We start with a simple observation.

\begin{lemma}
\label{l:xi}
Assume $\alpha_S>0$. Then there is a $\bar{\phi} \geq 0$ (independent of $\kappa$ and $n$)  such that $ 0 \leq \phi_{i} \leq \bar{\phi}.$

Consequently, $u_i \asymp \phi_i$.
\end{lemma}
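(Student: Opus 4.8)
The plan is to exploit the fact that, although the individual factors in $\phi_i=\kappa\eta d_i e^{-\alpha_S\kappa\theta d_i}$ can separately be large or small, their product is governed by a single universal bounded function. First I would rewrite
\begin{equation}
\phi_i=\frac{\eta}{\theta}\,(\kappa\theta d_i)\,e^{-\alpha_S\kappa\theta d_i}=\frac{\eta}{\theta}\,g(\kappa\theta d_i),\qquad g(x):=x\,e^{-\alpha_S x},
\end{equation}
which is legitimate since $0<\theta<1$ by the preceding lemma. The key point is that $g$ does not depend on $i$, $\kappa$, or $n$: an elementary computation ($g'(x)=e^{-\alpha_S x}(1-\alpha_S x)$) shows $g$ attains its maximum at $x=1/\alpha_S$, so $0\le g(x)\le g(1/\alpha_S)=\frac{1}{\alpha_S e}$ for all $x\ge0$. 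Here the hypothesis $\alpha_S>0$ is essential: for $\alpha_S=0$ the damping factor $e^{-\alpha_S\kappa\theta d_i}$ disappears and $\phi_i=\kappa\eta d_i$ grows without bound in $d_i$, which is exactly why this estimate (and the proportionality below) holds in the $\aw$ model but fails in the $\io$ model.

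It then remains to control the prefactor $\eta/\theta$. Comparing the defining sums term by term, each summand of $\eta=\frac1D\sum_i d_i u_i e^{-\alpha_I\kappa\theta d_i}$ equals the corresponding summand of $\theta=\frac1D\sum_i d_i u_i$ multiplied by $e^{-\alpha_I\kappa\theta d_i}\in(0,1]$; hence $\eta\le\theta$ and $\eta/\theta\le1$. Combining the two bounds yields $\phi_i\le \frac{1}{\alpha_S e}=:\bar\phi$, a constant independent of $\kappa$ and $n$, as claimed.

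For the consequence $u_i\asymp\phi_i$, I would read off both inequalities directly from $\zeta(\phi)=\frac{\beta_0\phi}{\mu+\beta_0\phi}$. The upper estimate $u_i=\zeta(\phi_i)\le\frac{\beta_0}{\mu}\phi_i$ is immediate. For the matching lower estimate I would invoke the uniform bound $\phi_i\le\bar\phi$ just established to write $\mu+\beta_0\phi_i\le\mu+\beta_0\bar\phi$, giving $u_i\ge\frac{\beta_0}{\mu+\beta_0\bar\phi}\phi_i$. Both multiplicative constants depend only on $\beta_0,\mu,\alpha_S$, so $u_i\asymp\phi_i$.

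I expect no serious obstacle; the only genuine content is recognizing the factorization through $g(x)=xe^{-\alpha_S x}$, so that the $i$-dependence is confined to a bounded factor while the unbounded growth of $d_i$ is tamed precisely by the awareness exponent $\alpha_S>0$. The one place deserving care is the passage from the bound on $\phi_i$ to the lower bound on $u_i$: without a $\kappa,n$-independent upper bound on $\phi_i$ the ratio $u_i/\phi_i=\beta_0/(\mu+\beta_0\phi_i)$ need not stay bounded away from zero, so the first half of the lemma is exactly what powers the second.
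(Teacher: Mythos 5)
Your proof is correct and takes essentially the same route as the paper's: bound $\eta \le \theta$, absorb all $i$-, $\kappa$- and $n$-dependence into the universal function $x e^{-\alpha_S x}$ (maximized at $x = 1/\alpha_S$ with value $1/(\alpha_S e)$), and then sandwich $u_i = \zeta(\phi_i)$ between $\frac{\beta_0}{\mu+\beta_0\bar\phi}\phi_i$ and $\frac{\beta_0}{\mu}\phi_i$, which is exactly the paper's two-line argument. Incidentally, your constant $\bar\phi = \frac{1}{\alpha_S e}$ is the correct one and silently fixes a typo in the paper, which writes $\frac{e^{-1}}{\alpha_I}$ where $\frac{e^{-1}}{\alpha_S}$ is meant.
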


\begin{proof}
Clearly, $0 \leq \eta \leq \theta$. Therefore, by setting $0 \leq x:=\alpha_S \kappa \theta d_i$
\begin{align*}
0 \leq \phi_i \leq \kappa \theta d_i e^{-\alpha_S \kappa \theta d_i}=\frac{1}{\alpha_S} xe^{-x} \leq \frac{e^{-1}}{\alpha_I}:=\bar{\phi}.
\end{align*}

For the second half:
\begin{align*}
  \frac{\beta_0}{\mu +\beta_0 \bar{\phi}} \phi_i\leq \frac{\beta_0 \phi_i}{\mu+\beta_0 \phi_i} \leq  \frac{\beta_0}{\mu} \phi_i.
\end{align*}
\end{proof}

\begin{lemma}
\label{l:theta_upper_bound}
For all $0<\varepsilon<1$
$$ \theta<(1+\varepsilon) \frac{\log \kappa }{(\alpha_S+\alpha_I) \kappa}$$
when $n$ and $\kappa$ are large enough.
\end{lemma}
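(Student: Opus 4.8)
The plan is to collapse the fixed-point system into a single scalar inequality for $\theta$ and then use monotonicity to pin it down. First I would feed the elementary bound $u_i=\zeta(\phi_i)=\frac{\beta_0\phi_i}{\mu+\beta_0\phi_i}\le \frac{\beta_0}{\mu}\phi_i$ into the consistency relation Eq. \eqref{eq:eta}. Note this bound needs no hypothesis on $\alpha_S$ (it is not the boundedness of $\phi_i$ from Lemma \ref{l:xi}, just convexity of $\zeta$), so the argument covers both $\io$ and $\aw$ uniformly. Substituting $\phi_i=\kappa\eta d_i e^{-\alpha_S\kappa\theta d_i}$ and cancelling the common positive factor $\eta=\eta(\theta)>0$ yields the self-consistency inequality
\begin{align*}
\frac{\mu}{\beta_0}\le \frac{\kappa}{D}\sum_i d_i^2\, e^{-(\alpha_S+\alpha_I)\kappa\theta d_i}=:S(\theta),
\end{align*}
which every admissible stationary $\theta$ must obey.

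The key structural observation is that $S(\theta)$ is \emph{strictly decreasing} in $\theta$, since each summand $e^{-(\alpha_S+\alpha_I)\kappa\theta d_i}$ decreases. Hence, to establish $\theta<\theta_0:=(1+\varepsilon)\frac{\log\kappa}{(\alpha_S+\alpha_I)\kappa}$ it suffices to show $S(\theta_0)<\mu/\beta_0$ for all large $n,\kappa$: were $\theta\ge\theta_0$, monotonicity would give $S(\theta)\le S(\theta_0)<\mu/\beta_0$, contradicting the inequality above.

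It then remains to bound $S(\theta_0)$. Writing $a:=(\alpha_S+\alpha_I)\kappa\theta_0=(1+\varepsilon)\log\kappa$, I would exploit that all Chung--Lu weights satisfy $d_i=(n/i)^\tau\ge 1$, together with the fact that $x\mapsto x^2 e^{-ax}$ is decreasing on $[1,\infty)$ as soon as $a>2$ (i.e. once $\kappa>e^{2/(1+\varepsilon)}$). Consequently each term obeys $d_i^2 e^{-a d_i}\le e^{-a}=\kappa^{-(1+\varepsilon)}$, so the whole sum is at most $n\kappa^{-(1+\varepsilon)}$. Combining this with $D\asymp n$ (which holds because $\tau<1$, giving $D=n^\tau\sum_i i^{-\tau}\asymp n$) yields
\begin{align*}
S(\theta_0)\le \frac{\kappa}{D}\,n\,\kappa^{-(1+\varepsilon)}=\frac{n}{D}\,\kappa^{-\varepsilon}\asymp \kappa^{-\varepsilon},
\end{align*}
which tends to $0$ as $\kappa\to\infty$ and is therefore below $\mu/\beta_0$ once $\kappa$ is large enough, closing the contradiction.

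I do not anticipate a serious obstacle here. The only delicate points are deriving the correct self-consistency inequality and recognizing that the crude termwise bound $d_i^2 e^{-ad_i}\le e^{-a}$ — available precisely because $d_i\ge 1$ — already suffices, so no refined incomplete-gamma asymptotics are needed (those enter only in the matching lower bound that upgrades this estimate to Eq. \eqref{eq:theta_sim}). The one thing to handle with care is the order of limits: the termwise bound requires $a=(1+\varepsilon)\log\kappa>2$, and the conclusion $\frac{n}{D}\kappa^{-\varepsilon}<\mu/\beta_0$ requires $\kappa$ large with $n$ already large enough to guarantee $D\asymp n$, which is exactly the ``$n$ and $\kappa$ large enough'' regime in the statement.
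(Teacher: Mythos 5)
Your proof is correct and takes essentially the same route as the paper's: an indirect argument in which the stationarity condition forces $\frac{\kappa}{D}\sum_i d_i^2 e^{-(\alpha_S+\alpha_I)\kappa\theta d_i}\lesssim \kappa^{-\varepsilon}\to 0$ whenever $\theta\ge(1+\varepsilon)\frac{\log\kappa}{(\alpha_S+\alpha_I)\kappa}$, exploiting $d_i\ge 1$ and the monotone decay of $x\mapsto x e^{-x}$ beyond $1$, contradicting the fixed-point relation. Your one (valid) streamlining is that the one-sided bound $\zeta(\phi_i)\le\frac{\beta_0}{\mu}\phi_i$ lets you treat $\alpha_S>0$ and $\alpha_S=0$ uniformly, whereas the paper splits into these two cases, invoking Lemma \ref{l:xi} in the former.
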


\begin{proof}
Indirectly assume $\theta \geq (1+\epsilon) \frac{\log \kappa}{(\alpha_S+\alpha_I) \kappa}.$

Firstly, assume $\alpha_S>0.$

Using Lemma \ref{l:xi} $\frac{u_i(\eta(\theta),\theta)}{\eta(\theta)} \asymp \kappa d_i e^{-\alpha_S \kappa \theta d_i},$ hence,
\begin{align*}
G(\theta) \asymp& \frac{\kappa}{D} \sum_{i}d_i^2e^{-(\alpha_S+\alpha_I)\kappa \theta d_i} \leq \frac{\kappa}{D}\sum_{i}d_i^2 e^{-(1+\varepsilon) \log \kappa \ d_i}= \\
&\frac{1}{(1+\varepsilon) \log \kappa}\frac{\kappa}{D}\sum_{i}d_i \left( (1+\varepsilon) \log \kappa \ d_i \right)e^{-(1+\varepsilon) \log \kappa \ d_i}.
\end{align*}

For $\kappa \geq e$ we have $(1+\varepsilon) \log \kappa \ d_i \geq 1$. Since $xe^{-x}$ is monotone decreasing for $x \geq 1$ we have that
\begin{align*}
 \leq \frac{1}{(1+\varepsilon) \log \kappa}\frac{\kappa}{D}\sum_{i}d_i \left( (1+\varepsilon) \log \kappa  \right)e^{-(1+\varepsilon) \log \kappa }=\kappa e^{-(1+\varepsilon) \log \kappa }=\kappa^{-\varepsilon} \to 0, 
\end{align*}
resulting in a contradiction.

Secondly, when $\alpha_S=0$ 
\begin{align*}
G(\theta)= \frac{\beta_0 \kappa}{D} \sum_{i} d_i^2 \frac{1}{\mu+\beta_0 \kappa \eta(\theta) d_i }e^{-\alpha_I \kappa \theta d_i} \leq \frac{\beta_0}{\mu} \frac{\kappa}{D}\sum_{i}d_i^2 e^{-\alpha_I \kappa \theta d_i} \leq \frac{\beta_0}{\mu} \kappa^{-\varepsilon} \to 0.
\end{align*}
\end{proof}

\begin{lemma}
\label{l:eta_bound}
Assume $\alpha_S=0$, and $G(\theta)=1$, then $\kappa \eta(\theta) <  \frac{1}{2}$ for large enough $\kappa$. (Also, $ \kappa \eta(\theta) \leq 1.)$
\end{lemma}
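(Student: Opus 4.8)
\textit{Proof strategy.} First I would specialize the stationary relations to the I-aware case $\alpha_S=0$. Then the definition of $\phi_i$ collapses to $\phi_i=\kappa\eta d_i$, so $u_i=\zeta(\phi_i)=\beta_0\kappa\eta d_i/(\mu+\beta_0\kappa\eta d_i)$ no longer depends on $\theta$, and the two consistency relations \eqref{eq:theta} and \eqref{eq:eta} become constraints tying together the single scale $\kappa\eta$ and the quantity $\theta$. The whole argument then hinges on one elementary comparison between these relations. Since the normalization gives $d_i=(n/i)^{\tau}\ge 1$ for every $i$, and $\alpha_I\kappa\theta\ge 0$, each exponential weight in \eqref{eq:eta} satisfies $e^{-\alpha_I\kappa\theta d_i}\le e^{-\alpha_I\kappa\theta}$. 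Multiplying by $d_iu_i\ge 0$ and summing, \eqref{eq:eta} and \eqref{eq:theta} yield
\begin{equation*}
\eta=\frac1D\sum_i d_iu_i\,e^{-\alpha_I\kappa\theta d_i}\le e^{-\alpha_I\kappa\theta}\,\frac1D\sum_i d_iu_i=\theta\,e^{-\alpha_I\kappa\theta},
\end{equation*}
that is, $\kappa\eta\le\kappa\theta\,e^{-\alpha_I\kappa\theta}$.

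Next I would read the claim off the behaviour of $h(t):=t\,e^{-\alpha_I t}$ evaluated at $t=\kappa\theta\ge 0$. Since $\sup_{t\ge 0}h(t)=h(1/\alpha_I)=1/(\alpha_I e)$, the bound above already gives the unconditional estimate $\kappa\eta\le 1/(\alpha_I e)$, which is the parenthetical claim $\kappa\eta\le 1$ (indeed below $1/e<1/2$ for the relevant value $\alpha_I=1$). For the sharp statement I would instead exploit that $h(t)\to 0$ as $t\to\infty$: it then suffices to show $\kappa\theta\to\infty$, since in that case $\kappa\eta\le h(\kappa\theta)\to 0$, and in particular $\kappa\eta<1/2$ for $\kappa$ large. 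Note that Lemma \ref{l:theta_upper_bound} only bounds $\kappa\theta$ from \emph{above} (by $\asymp\log\kappa$) and hence is useless here; what is needed is a genuine lower bound on $\kappa\theta$.

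To obtain $\kappa\theta\to\infty$ I would argue by contradiction directly from $G(\theta)=1$, i.e.\ from \eqref{eq:G}, which with $\alpha_S=0$ reads $1=(\beta_0\kappa/D)\sum_i d_i^2(\mu+\beta_0\kappa\eta d_i)^{-1}e^{-\alpha_I\kappa\theta d_i}$. Suppose $\kappa\theta$ did not tend to infinity, so that $\kappa\theta\le C$ along some subsequence $\kappa\to\infty$; recall we already know $\kappa\eta\le 1/(\alpha_I e)$ is bounded. Restricting the sum to the indices with $d_i\le M$ for a fixed constant $M$, there are $\asymp n\asymp D$ such vertices, and on them $(\mu+\beta_0\kappa\eta d_i)^{-1}$ is bounded below while $e^{-\alpha_I\kappa\theta d_i}\ge e^{-\alpha_I CM}$; hence the right-hand side is $\gtrsim\kappa\,\bigl(\tfrac1D\sum_{d_i\le M}d_i^2\bigr)\asymp\kappa\to\infty$, contradicting its value $1$. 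This forces $\kappa\theta\to\infty$ and completes the argument. The main obstacle is exactly this last step: producing $\kappa\theta\to\infty$ in a self-contained manner, because the full asymptotic $\theta\sim\log\kappa/((\alpha_S+\alpha_I)\kappa)$ is established only later and in fact depends on the present lemma, so it cannot be invoked here without circularity.
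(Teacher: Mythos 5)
Your proof is correct, but it takes a genuinely different and more self-contained route than the paper's. The paper disposes of the lemma in two lines by contradiction: if $\kappa\eta(\theta)\ge\tfrac12$, then since every weight satisfies $d_i\ge1$, one has $\phi_i=\kappa\eta(\theta)d_i\ge\tfrac12$ for \emph{all} $i$, hence $\theta=\frac1D\sum_i d_i\,\zeta(\phi_i)\ge\zeta\left(\tfrac12\right)>0$, a constant, contradicting $\theta\to0$ --- which is exactly the content of Lemma \ref{l:theta_upper_bound}. So your remark that Lemma \ref{l:theta_upper_bound} is ``useless here'' is mistaken: an upper bound on $\theta$ is precisely what the paper leverages, and there is no circularity in doing so, because the $\alpha_S=0$ case of Lemma \ref{l:theta_upper_bound} is proved without invoking the present lemma. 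The circularity you rightly worry about concerns only the \emph{lower} bound (Lemma \ref{l:theta_lower_bound}, whose $\alpha_S=0$ case does use $\kappa\eta(\theta)<\tfrac12$), so the full asymptotic $\theta\sim\log\kappa/((\alpha_S+\alpha_I)\kappa)$ is indeed off-limits, as you note --- but the one-sided bound suffices and is available. What your route buys instead: the pointwise inequality $\eta\le\theta e^{-\alpha_I\kappa\theta}$ (again from $d_i\ge1$) gives the unconditional estimate $\kappa\eta\le 1/(\alpha_I e)$, which for the paper's actual $\alpha_I=1$ is $1/e<1/2$ with no largeness assumption at all (though for $\alpha_I<1/e$ it does not by itself deliver the parenthetical $\kappa\eta\le1$); and your extraction of $\kappa\theta\to\infty$ directly from $G(\theta)=1$, by restricting the sum in \eqref{eq:G} to the $\asymp n$ vertices with $d_i\le M$, where the exponential and the factor $(\mu+\beta_0\kappa\eta d_i)^{-1}$ are bounded away from zero so that $G(\theta)\gtrsim\kappa\to\infty$, is sound and yields the strictly stronger conclusion $\kappa\eta(\theta)\to0$ without any input from the $\theta$-asymptotics. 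In short, the paper's proof buys brevity by reusing Lemma \ref{l:theta_upper_bound}; yours buys independence from it plus a quantitative, partly $\kappa$-uniform bound.
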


\begin{proof}
Indirectly assume $\kappa \eta(\theta) \geq \frac{1}{2}$.

\begin{align*}
&\phi_{i}(\eta(\theta),\theta)= \kappa \eta(\theta) d_i \geq \frac{1}{2} \\
&\theta=\frac{1}{D}\sum_{i}d_i \zeta \left(\phi_i(\eta(\theta),\theta)  \right)\geq \frac{1}{D}\sum_{i}d_i \zeta\left( \frac{1}{2}\right)=\zeta\left( \frac{1}{2}\right)>0
\end{align*}

This leads to a contradiction as $\theta \to 0.$

\end{proof}

\begin{lemma}
\label{l:theta_lower_bound}
For all $0<\varepsilon <1$
$$\theta>(1-\varepsilon) \frac{\log \kappa}{(\alpha_S+\alpha_I) \kappa}$$
for large enough $n$ and $\kappa$.
\end{lemma}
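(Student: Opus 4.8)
The plan is to mirror the indirect strategy of Lemma~\ref{l:theta_upper_bound}: assume the solution $\theta$ of the defining relation $G(\theta)=1$ violates the claimed bound, i.e.\ $\theta\le(1-\varepsilon)\frac{\log\kappa}{(\alpha_S+\alpha_I)\kappa}$, and derive that $G(\theta)\to\infty$ as $\kappa\to\infty$, contradicting $G(\theta)=1$. Writing $c:=(\alpha_S+\alpha_I)\kappa\theta$, the indirect assumption reads $c\le(1-\varepsilon)\log\kappa$, and the whole argument reduces to a good \emph{lower} bound on $\frac{\kappa}{D}\sum_i d_i^2 e^{-cd_i}$, exactly the sum that the upper-bound proof controlled from above.

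First I would reduce $G(\theta)$ to this sum. For $\alpha_S>0$, Lemma~\ref{l:xi} gives $u_i\asymp\phi_i$, hence $\frac{u_i}{\eta}\asymp\kappa d_i e^{-\alpha_S\kappa\theta d_i}$ and $G(\theta)\asymp\frac{\kappa}{D}\sum_i d_i^2 e^{-(\alpha_S+\alpha_I)\kappa\theta d_i}$, as in Lemma~\ref{l:theta_upper_bound}. For $\alpha_S=0$ I would instead invoke Lemma~\ref{l:eta_bound} ($\kappa\eta(\theta)\le 1$) to bound the denominator in $G(\theta)=\frac{\beta_0\kappa}{D}\sum_i d_i^2\frac{1}{\mu+\beta_0\kappa\eta d_i}e^{-\alpha_I\kappa\theta d_i}$ from below by a constant on the relevant range of degrees, which again leaves a sum of the form $\frac{\kappa}{D}\sum_i d_i^2 e^{-cd_i}$ up to universal constants.

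The core step is the matching lower bound on this sum, obtained by restricting to the window $d_i\in[1,1+\tfrac1c]$, i.e.\ to the lowest-degree nodes, where the summand $d_i^2 e^{-cd_i}$ is largest (its unconstrained maximiser $2/c$ lies below $1$ for large $c$, so on $d_i\ge 1$ the summand is decreasing). On this window $d_i^2\ge1$ and $e^{-cd_i}\ge e^{-1}e^{-c}$, while $d_i=(n/i)^\tau$ with $\tau=\frac{1}{\gamma-1}\in(\tfrac12,1)$ shows the window contains $\asymp n/c$ nodes; combined with $D\asymp n$ this yields $G(\theta)\gtrsim \kappa\,\frac{e^{-c}}{c}$. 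Since $c\mapsto e^{-c}/c$ is decreasing and $c\le(1-\varepsilon)\log\kappa$, I obtain $G(\theta)\gtrsim \frac{\kappa^{\varepsilon}}{(1-\varepsilon)\log\kappa}\to\infty$, the desired contradiction; together with Lemma~\ref{l:theta_upper_bound} this pins down $\theta$ and proves Eq.~\eqref{eq:theta_sim}.

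The step I expect to be the main obstacle is making the window estimate fully rigorous and uniform: verifying via a discrete-to-continuum count with $d_i=(n/i)^\tau$ that $[1,1+\tfrac1c]$ really captures $\Theta(n/c)$ nodes for all large $n$, and, in the $\alpha_S=0$ case, checking that Lemma~\ref{l:eta_bound} keeps $\frac{1}{\mu+\beta_0\kappa\eta d_i}$ bounded below on that window. One must also respect the order of limits, so the window count and the constant in $D\asymp n$ must hold uniformly as $n\to\infty$ before $\kappa\to\infty$ delivers the blow-up.
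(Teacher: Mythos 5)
Your proposal is correct and follows essentially the same route as the paper's proof: the same indirect assumption $\theta\le(1-\varepsilon)\frac{\log\kappa}{(\alpha_S+\alpha_I)\kappa}$, the same reduction of $G(\theta)$ to $\frac{\kappa}{D}\sum_i d_i^2 e^{-(\alpha_S+\alpha_I)\kappa\theta d_i}$ via Lemma~\ref{l:xi} for $\alpha_S>0$ and Lemma~\ref{l:eta_bound} for $\alpha_S=0$, and a lower bound obtained by restricting the sum to the lowest-degree nodes so that $G(\theta)\to\infty$ contradicts $G(\theta)=1$. The only difference is cosmetic: you use the shrinking window $d_i\in[1,1+1/c]$ containing $\Theta(n/c)$ nodes, whereas the paper uses the fixed window $d_i\in[1,1+\varepsilon]$ (i.e.\ $i\ge(1+\varepsilon)^{-1/\tau}n$) containing $\Theta(n)$ nodes, which gives you the marginally sharper divergence $\kappa^{\varepsilon}/\log\kappa$ in place of the paper's $\kappa^{\varepsilon^2}$ but is the same underlying idea.
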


\begin{proof}

Indirectly assume $\theta \leq (1-\varepsilon) \frac{\log \kappa}{(\alpha_S+\alpha_I) \kappa}. $

Firstly, take $\alpha_S>0$. Note that $d_i \leq 1+\varepsilon \ \Leftrightarrow i \geq (1+\varepsilon)^{-\frac{1}{\tau}}n.$

\begin{align*}
G(\theta) \asymp& \frac{\kappa}{D} \sum_{i}d_i^2e^{-(\alpha_S+\alpha_I)\kappa \theta d_i} \geq \frac{\kappa}{D} \sum_{i \geq (1+\varepsilon)^{-\frac{1}{\tau}}n }d_i^2e^{-(1-\varepsilon) \log \kappa \ d_i} \geq \\
& \frac{\kappa}{D}\sum_{i \geq (1+\varepsilon)^{-\frac{1}{\tau}}n } e^{-(1-\varepsilon^2) \log \kappa } \asymp \left(1- (1+\varepsilon)^{-\frac{1}{\tau}}\right) \kappa e^{-(1-\varepsilon^2) \log \kappa } \\
&\asymp \kappa^{\varepsilon^2} \to \infty,
\end{align*}
resulting in a contradiction.

Now assume $\alpha_S=0$. Using Lemma \ref{l:eta_bound},
\begin{align*}
G(\theta)=&\frac{ \kappa}{D} \sum_{i}d_i \frac{\beta_0 d_i}{\mu+\beta_0 \kappa \eta(\theta) d_i}e^{-(\alpha_S+\alpha_I)\kappa \theta d_i} \geq  \frac{ \kappa}{D} \sum_{i}d_i \frac{\beta_0 d_i}{\mu+\beta_0  d_i}e^{-(\alpha_S+\alpha_I)\kappa \theta d_i} \\ 
\geq &  \frac{\beta_0 }{\mu+\beta_0  } \frac{ \kappa}{D} \sum_{i}d_i e^{-(\alpha_S+\alpha_I)\kappa \theta d_i} .
\end{align*}

The rest of the argument is the same as in the $\alpha_S>0$ case as both $d_i^2$ and $d_i$ are lower bounded by $1$ when they are not in the exponent.
\end{proof}

\begin{remark}
\label{r:rho}
Putting Lemma \ref{l:theta_upper_bound} and \ref{l:theta_lower_bound} yields
\begin{align*}
\theta \sim \frac{\log \kappa}{(\alpha_S+\alpha_I) \kappa}
\end{align*}
for both the I- and SI-awareness model.
\end{remark}

\subsection{Calculating $I$}

In this section, we are going to prove Eq. (9) and Eq. (10) of the main text.

\begin{lemma}
\label{l:split}

When $\alpha_S=0$
\begin{align*}
u_i \asymp \begin{cases}
                        1 \ \ \text{if $i \leq \left(\kappa \eta(\theta)\right)^{\frac{1}{\tau}}n$}, \\
                        \kappa \eta(\theta) d_i \ \ \text{if $i \geq \left(\kappa \eta(\theta)\right)^{\frac{1}{\tau}}n$}.
                    \end{cases}
\end{align*}
\end{lemma}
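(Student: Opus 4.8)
The plan is to exploit the fact that when $\alpha_S=0$ the awareness factor in $\phi_i$ disappears, so that $u_i$ depends on the index $i$ only through the single increasing quantity $\kappa\eta(\theta)d_i$. Setting $\alpha_S=0$ in the definition of $\phi_i$ gives $\phi_i=\kappa\eta(\theta)d_i$, and hence $u_i=\zeta(\kappa\eta(\theta)d_i)$ with $\zeta(x)=\frac{\beta_0 x}{\mu+\beta_0 x}$. The lemma then reduces to understanding the two asymptotic regimes of the fixed, $n$- and $\kappa$-independent function $\zeta$, together with identifying the index at which its argument crosses the order-one scale.

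First I would record the elementary two-sided bounds on $\zeta$. Since $\zeta$ is increasing and concave with $\zeta(0)=0$, for $x\ge 1$ one has $\frac{\beta_0}{\mu+\beta_0}\le \zeta(x)<1$, so $\zeta(x)\asymp 1$; and for $0\le x\le 1$ one has $\frac{\beta_0}{\mu+\beta_0}x\le \zeta(x)\le \frac{\beta_0}{\mu}x$, so $\zeta(x)\asymp x$. Both comparison constants depend only on $\beta_0$ and $\mu$, hence are uniform in $n$ and $\kappa$, which is exactly what the $\asymp$ notation requires.

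Next I would locate the threshold. The argument $\kappa\eta(\theta)d_i$ equals $1$ precisely when $d_i=(\kappa\eta(\theta))^{-1}$, and since $d_i=(n/i)^\tau$ this happens at $i=(\kappa\eta(\theta))^{1/\tau}n$. Because $d_i$ is decreasing in $i$, the argument exceeds $1$ for $i\le (\kappa\eta(\theta))^{1/\tau}n$ and lies below $1$ for $i\ge (\kappa\eta(\theta))^{1/\tau}n$. Substituting the two regimes of $\zeta$ then yields $u_i\asymp 1$ in the first range and $u_i\asymp \kappa\eta(\theta)d_i$ in the second, which is the claim.

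The one point that needs care—and the only place an earlier result is genuinely needed—is that the threshold index $(\kappa\eta(\theta))^{1/\tau}n$ actually lies in the admissible range $[1,n]$, so that both cases are nonempty and the case split is meaningful. Here I would invoke Lemma \ref{l:eta_bound}, which gives $\kappa\eta(\theta)\le 1$ (indeed $<\tfrac12$ for large $\kappa$); together with $\tau>0$ this forces $(\kappa\eta(\theta))^{1/\tau}\le 1$ and hence the threshold is at most $n$. I do not expect a serious obstacle beyond this bookkeeping: once $\phi_i$ is linear in $d_i$, the statement is essentially a restatement of the saturation behavior of $\zeta$, with the delicate consistency analysis already absorbed into the bound $\kappa\eta(\theta)\le 1$ and the value of $\theta$ from Remark \ref{r:rho}.
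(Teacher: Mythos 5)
Your proposal is correct and follows essentially the same route as the paper's own proof: you identify the threshold index $i=(\kappa\eta(\theta))^{1/\tau}n$ where $\phi_i=\kappa\eta(\theta)d_i$ crosses $1$, and apply the uniform two-sided bounds $\tfrac{\beta_0}{\mu+\beta_0}\le\zeta(x)\le 1$ for $x\ge 1$ and $\tfrac{\beta_0}{\mu+\beta_0}x\le\zeta(x)\le\tfrac{\beta_0}{\mu}x$ for $0\le x\le 1$. Your invocation of Lemma~\ref{l:eta_bound} to ensure $\kappa\eta(\theta)<1$ matches the paper's use of that same lemma, so the two arguments coincide in substance.
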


\begin{proof}
Note that $ \phi_i=\kappa \eta(\theta) d_i=1 \ \Leftrightarrow \ i=\left( \kappa \eta(\theta) \right)^{\frac{1}{\tau}}n $.
\begin{align*}
&i \leq ( \kappa \eta(\theta))^{\frac{1}{\tau}}n \ \Rightarrow \ \frac{\beta_0}{\mu+\beta_0} \leq u_i \leq 1 \\
&i \geq \left( \kappa \eta(\theta) \right)^{\frac{1}{\tau}}n \ \Rightarrow \ \frac{\beta_0}{\mu+\beta_0} \kappa \eta(\theta) d_i \leq u_i \leq 
 \frac{\beta_0}{\mu}\kappa \eta(\theta) d_i
\end{align*}
where we used $\kappa \eta(\theta)<1$ according to Lemma \ref{l:eta_bound}.
\end{proof}

\begin{lemma}
\label{l:I_I}

When $\alpha_S=0$
$$I \asymp  \kappa \eta(\theta) n$$
for large enough $n$ and $\kappa$.
\end{lemma}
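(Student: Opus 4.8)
The plan is to sandwich $I=\sum_i u_i$ between two constant multiples of $\kappa\eta(\theta)\,n$, exploiting the structure that when $\alpha_S=0$ we have the explicit form $\phi_i=\kappa\eta(\theta)d_i$ and $u_i=\zeta(\phi_i)=\frac{\beta_0\phi_i}{\mu+\beta_0\phi_i}$. Two ingredients established earlier will do all the work: the uniform bound $\kappa\eta(\theta)<\tfrac12$ from Lemma~\ref{l:eta_bound}, and the degree profile $d_i=(n/i)^{\tau}$ with $0<\tau<1$, so that $d_i\geq 1$ for every $i\leq n$ and $D=\sum_i d_i\asymp n$ (the first moment is finite for $\tau<1$).

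For the upper bound I would not even need the dichotomy: since $\zeta(x)\leq \frac{\beta_0}{\mu}x$, we get $u_i\leq \frac{\beta_0}{\mu}\kappa\eta(\theta)d_i$ termwise, and summing gives $I\leq \frac{\beta_0}{\mu}\kappa\eta(\theta)\,D\asymp \kappa\eta(\theta)\,n$. For the lower bound I would invoke the splitting of Lemma~\ref{l:split}: set the crossover index $m:=(\kappa\eta(\theta))^{1/\tau}n$, so that every $i\geq m$ satisfies $u_i\asymp \kappa\eta(\theta)d_i\geq \kappa\eta(\theta)$, the last inequality because $d_i\geq 1$. Discarding the low-degree block $i<m$ entirely, $I\geq \sum_{i\geq m}u_i\gtrsim \kappa\eta(\theta)\,(n-m)$. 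Because $\kappa\eta(\theta)<\tfrac12$ forces $m=(\kappa\eta(\theta))^{1/\tau}n\leq (\tfrac12)^{1/\tau}n$, a fixed fraction strictly below $n$, we obtain $n-m\geq \bigl(1-(\tfrac12)^{1/\tau}\bigr)n\asymp n$, hence $I\gtrsim \kappa\eta(\theta)\,n$, closing the sandwich.

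The one point requiring genuine care—and what I expect to be the main obstacle—is that $\kappa\eta(\theta)$ appears on both sides of $I\asymp \kappa\eta(\theta)\,n$ and itself depends on $\kappa$, so the implied $\Theta$-constants must be uniform in $\kappa$ (for all large $\kappa$), not merely fixed for each $\kappa$. This is precisely what Lemma~\ref{l:eta_bound} delivers: the bound $\kappa\eta(\theta)<\tfrac12$ keeps the crossover index $m$ bounded away from $n$ by a $\kappa$-independent factor, so the statement ``a constant proportion of the nodes carry infection of order $\kappa\eta(\theta)$'' holds with constants that do not degrade as $\kappa\to\infty$. Without this uniform separation the high-degree block sustaining the lower bound could in principle shrink relative to $n$, and the two-sided estimate would collapse; all the real content of the lemma is therefore in combining Lemma~\ref{l:split} with this uniform control on $\kappa\eta(\theta)$.
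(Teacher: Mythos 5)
Your proof is correct, and it is a somewhat leaner route than the paper's, though built from the same two ingredients. The paper proves the lemma by evaluating \emph{both} blocks of the split from Lemma~\ref{l:split}: it writes $I$ as the sum over the hub block $i \leq m := (\kappa\eta(\theta))^{1/\tau}n$ (contributing $\asymp m$, since $u_i \asymp 1$ there) plus the low-degree block, whose contribution $\kappa\eta(\theta)\sum_{i>m} d_i$ it computes by comparison with the integral $\int_{m}^{n} x^{-\tau}\,\d x$, and it then needs the dominance step $(\kappa\eta(\theta))^{1/\tau} n \lesssim \kappa\eta(\theta) n$, which is exactly where $\kappa\eta(\theta)<1$ from Lemma~\ref{l:eta_bound} enters. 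Your sandwich avoids both computations: the termwise bound $u_i=\zeta(\kappa\eta(\theta)d_i)\leq \frac{\beta_0}{\mu}\kappa\eta(\theta)d_i$ holds for every $i$ with no case distinction, so the upper bound follows directly from $D\asymp n$; and the lower bound replaces the integral estimate by the crude but sufficient $d_i\geq 1$ on the low-degree block, invoking Lemma~\ref{l:eta_bound} only to pin the crossover at $m\leq 2^{-1/\tau}n$, a fixed fraction below $n$. What the paper's bookkeeping buys is the explicit size $\asymp(\kappa\eta(\theta))^{1/\tau}n$ of the infected-hub block, i.e. the same two-block computation it reuses immediately afterwards to determine $\theta$ --- where the analogue of your $d_i\geq 1$ shortcut would genuinely fail, since the $d_i^2$ tail there changes behavior at $\tau=\tfrac12$ and the hub block can dominate; your version buys brevity and transparently uniform constants. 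Your closing remark on uniformity is also on target: in both proofs the only input preventing the implied constants from degrading as $\kappa\to\infty$ is Lemma~\ref{l:eta_bound}, used by the paper for the final dominance step and by you to keep $n-m\asymp n$.
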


\begin{proof}
Based on Lemma \ref{l:eta_bound} we know that $\kappa \eta(\theta)<1$ for large enough $n$ and $\kappa$.
\begin{align*}
I=&\sum_{i \leq (\kappa \eta(\theta))^{\frac{1}{\tau}}n} u_i+\sum_{i>(\kappa \eta(\theta))^{\frac{1}{\tau}}n}u_i \asymp (\kappa \eta(\theta))^{\frac{1}{\tau}}n+ \kappa \eta(\theta) \sum_{i>( \kappa \eta(\theta))^{\frac{1}{\tau}}n} d_i \\
=&(\kappa \eta(\theta))^{\frac{1}{\tau}}n+ \kappa \eta(\theta) n^{\tau} \sum_{i>\left(\kappa \eta(\theta)\right)^{\frac{1}{\tau}}n}i^{-\tau} \\ & \asymp ( \kappa \eta(\theta))^{\frac{1}{\tau}}n+\kappa \eta(\theta) n^{\tau} \int_{( \kappa \eta(\theta))^{\frac{1}{\tau}}n}^{n} x^{-\tau} \d x \asymp ( \kappa \eta(\theta))^{\frac{1}{\tau}}n+ \kappa \eta(\theta) n  \asymp \kappa \eta(\theta) n
\end{align*}
\end{proof}

\begin{lemma}

When $\alpha_S=0$

\begin{align*}
\theta \asymp \begin{cases}
                        \kappa \eta(\theta) \ \ \text{if  $0<\tau<\frac{1}{2}$}, \\
                        -\kappa \eta(\theta) \log \kappa \eta(\theta) \ \ \text{if $\tau=\frac{1}{2}$}, \\
                        (\kappa \eta(\theta))^{\frac{1}{\tau}-1} \ \ \textit{if $\frac{1}{2}<\tau<1$} .
                    \end{cases}
\end{align*}
\end{lemma}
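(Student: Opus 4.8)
The plan is to evaluate $\theta = \frac{1}{D}\sum_i d_i u_i$ directly, feeding in the dichotomy for $u_i$ from Lemma \ref{l:split} and splitting the sum at the crossover index $i^{\star} := (\kappa\eta(\theta))^{1/\tau} n$ where $\phi_i = 1$. First I would record the elementary asymptotics $D = n^{\tau}\sum_{i=1}^n i^{-\tau} \asymp n$ (valid since $0<\tau<1$), so that the prefactor $1/D$ contributes a clean factor $1/n$. Abbreviating $c := \kappa\eta(\theta)$, I would then write
\begin{align*}
\sum_i d_i u_i \asymp \underbrace{\sum_{i \le i^{\star}} d_i}_{\text{low-degree bulk}} \; + \; c\underbrace{\sum_{i > i^{\star}} d_i^2}_{\text{high-degree tail}},
\end{align*}
using $u_i \asymp 1$ below $i^{\star}$ and $u_i \asymp c\,d_i$ above it.

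For the first sum, approximating $\sum_{i\le i^{\star}} i^{-\tau}$ by $\int_1^{i^{\star}} x^{-\tau}\,\d x \asymp (i^{\star})^{1-\tau}$ gives a low-degree contribution of order $n^{\tau}(i^{\star})^{1-\tau} = c^{(1-\tau)/\tau} n$, uniformly in $\tau$. The second sum is what produces the three-way split, because $\sum_{i>i^{\star}} i^{-2\tau}$ changes character at $2\tau = 1$. For $\tau < \tfrac12$ the exponent $2\tau<1$, so the tail is dominated by the large indices $i \approx n$ and $\sum_{i>i^{\star}} i^{-2\tau}\asymp n^{1-2\tau}$, giving a high-degree contribution $\asymp c\,n$. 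For $\tau = \tfrac12$ the sum is harmonic and yields $\log(n/i^{\star}) \asymp -\log c$, giving $\asymp -c\log c\,n$. For $\tau > \tfrac12$ the tail converges and is dominated by its lower endpoint $i \approx i^{\star}$, so $\sum_{i>i^{\star}} i^{-2\tau}\asymp (i^{\star})^{1-2\tau}$, which after substituting $i^{\star} = c^{1/\tau}n$ simplifies to a high-degree contribution $\asymp c^{(1-\tau)/\tau} n$.

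Finally I would decide, in each regime, which contribution dominates, crucially invoking $c = \kappa\eta(\theta) < 1$ from Lemma \ref{l:eta_bound} so that smaller powers of $c$ are the larger ones. Comparing the exponent $(1-\tau)/\tau = \tfrac1\tau - 1$ against $1$: for $\tau < \tfrac12$ one has $\tfrac1\tau - 1 > 1$, so the high-degree term $c\,n$ beats the low-degree term $c^{(1-\tau)/\tau}n$, yielding $\theta \asymp c$; for $\tau = \tfrac12$ the logarithmic tail $-c\log c\,n$ dominates the low-degree term $c\,n$, yielding $\theta \asymp -c\log c$; and for $\tau > \tfrac12$ both contributions are of the same order $c^{(1-\tau)/\tau}n$, yielding $\theta \asymp c^{1/\tau - 1}$. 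Dividing through by $D \asymp n$ then reproduces the three stated cases. I expect the main obstacle to be handling the phase transition at $\tau = \tfrac12$ cleanly: keeping the implicit $\asymp$ constants uniform as both $n \to \infty$ and $\kappa \to \infty$, checking that $i^{\star}$ genuinely lies in $\{1,\dots,n\}$ (which needs $c < 1$ together with $c^{1/\tau}n \ge 1$ for large $n$), and justifying the sum-to-integral replacements at the moving endpoint $i^{\star}$ rather than at the fixed endpoints.
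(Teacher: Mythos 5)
Your proposal is correct and is essentially identical to the paper's own proof: the paper likewise splits $\theta=\frac{1}{D}\sum_i d_i u_i$ at the crossover index $(\kappa\eta(\theta))^{1/\tau}n$ using the dichotomy of Lemma~\ref{l:split}, computes the bulk contribution $(\kappa\eta(\theta))^{\frac{1}{\tau}-1}$ and the tail $\kappa\eta(\theta)\,n^{-(1-2\tau)}\sum_{i>i^{\star}} i^{-2\tau}$ by the same three-way evaluation around $2\tau=1$, and compares the resulting powers of $\kappa\eta(\theta)$ using $\kappa\eta(\theta)<\frac{1}{2}$ from Lemma~\ref{l:eta_bound}. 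The uniformity and moving-endpoint caveats you flag at the end are genuine but are left equally implicit in the paper's argument.
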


\begin{remark}
\label{r:I_io}
 Using Lemma \ref{l:I_I} when $\alpha_S=0$ we have

 \begin{align*}
I_I \asymp \begin{cases}
                        \theta_I n \ \ \text{if  $0<\tau<\frac{1}{2}$}, \\
                        \theta_I^{\frac{\tau}{1-\tau}}n \ \ \textit{if $\frac{1}{2}<\tau<1$} .
                    \end{cases}
 \end{align*}
\end{remark}

\begin{proof}
\begin{align*}
& \theta=\frac{1}{D}\sum_{i}d_i u_i \asymp  \frac{1}{n} \sum_{i \leq (\kappa \eta(\theta))^{\frac{1}{\tau}}n} d_i+\frac{\kappa \eta(\theta) }{n} \sum_{i > (\kappa \eta(\theta))^{\frac{1}{\tau}}n} d_i^2  \\
&\frac{1}{n} \sum_{i \leq (\kappa \eta(\theta))^{\frac{1}{\tau}}n} d_i = n^{-(1-\tau)}\sum_{i \leq (\kappa\eta(\theta))^{\frac{1}{\tau}}n}i^{-\tau} \asymp n^{-(1-\tau)} \left((\kappa \eta(\theta))^{\frac{1}{\tau}}n \right)^{1-\tau}=( \kappa \eta(\theta))^{\frac{1}{\tau}-1} \\
&\frac{\kappa \eta(\theta) }{n} \sum_{i > (\kappa \eta)^{\frac{1}{\tau}}n} d_i^2= \kappa \eta(\theta) n^{-(1-2\tau)} \sum_{i > (\kappa \eta(\theta))^{\frac{1}{\tau}}n}i^{-2 \tau}
\end{align*}

When $\tau=\frac{1}{2}$
\begin{align*}
&\kappa \eta(\theta) n^{-(1-2\tau)} \sum_{i > (\kappa \eta(\theta))^{\frac{1}{\tau}}n}i^{-2 \tau}= \kappa \eta(\theta) \sum_{i>(\kappa \eta(\theta))^2 n}\frac{1}{i} \asymp  \kappa \eta(\theta) \log \frac{n}{(\kappa\eta(\theta))^2 n} \asymp -\kappa \eta(\theta) \log  \kappa \eta (\theta) \\
& \theta \asymp \kappa \eta(\theta)- \kappa \eta(\theta) \log \kappa  \eta(\theta) \asymp -\kappa \eta(\theta) \log \kappa \eta(\theta),
\end{align*}
otherwise,
\begin{align*}
\kappa \eta(\theta) n^{-(1-2\tau)} \sum_{i > (\kappa \eta(\theta))^{\frac{1}{\tau}}n}i^{-2 \tau} \asymp& \frac{1}{1-2 \tau} \kappa \eta(\theta) n^{-(1-2\tau)} \left(n^{1-2\tau}-\left((\kappa \eta(\theta))^{\frac{1}{\tau}}n \right)^{1-2 \tau} \right) \\
=&\frac{1}{1-2 \tau} \left(\kappa \eta(\theta)-(\kappa \eta(\theta))^{\frac{1}{\tau}-1} \right).
\end{align*}

When $0<\tau<\frac{1}{2}$
\begin{align*}
\theta \asymp \kappa \eta(\theta)+(\kappa \eta(\theta))^{\frac{1}{\tau}-1} \asymp \kappa \eta(\theta).
\end{align*}

When $\frac{1}{2}<\tau <1$ the numerator $1-2 \tau$ becomes negative, hence,
\begin{align*}
\theta \asymp (\kappa \eta(\theta))^{\frac{1}{\tau}-1}.
\end{align*}
\end{proof}

\begin{lemma}
\label{l:I_theta_upperbound}
Let us denote the average degree by $\langle d \rangle:= \frac{D}{n} \asymp 1$. Then we have
\begin{align}
\label{eq:I_theta_upperbound}
I \leq \langle d \rangle \theta n.
\end{align}
\end{lemma}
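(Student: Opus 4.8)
The plan is to rewrite the target bound so that both sides live on the same footing, and then exploit the fact that every weight $d_i$ is at least one. First I would unfold the definitions: since $\langle d \rangle = D/n$ and $\theta = \frac{1}{D}\sum_i d_i u_i$, the right-hand side collapses to $\langle d \rangle\,\theta\,n = D\theta = \sum_i d_i u_i$. The inequality to prove is therefore exactly $\sum_i u_i \le \sum_i d_i u_i$, which is nothing but the rearranged form of the elementary bound $\theta \ge I/D$ already recorded as Eq.~\eqref{eq:theta_I} in the main text. So the whole lemma reduces to this single termwise comparison.

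Next I would justify the comparison term by term. Because the weights are $d_i = (n/i)^{\tau}$ with $0 < \tau < 1$ and the index ranges over $i \in \{1,\dots,n\}$, we have $n/i \ge 1$ and hence $d_i \ge 1$ for every $i$. Since each stationary infection probability is nonnegative (indeed $u_i = \zeta(\phi_i) \in [0,1]$ by the definition of $\zeta$), multiplying $d_i \ge 1$ by $u_i \ge 0$ and summing over $i$ yields $\sum_i u_i \le \sum_i d_i u_i = D\theta = \langle d \rangle\,\theta\,n$, which is precisely the assertion $I \le \langle d \rangle\,\theta\,n$.

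There is essentially no analytic obstacle here; the statement is a structural observation rather than a genuine estimate. The only point that requires a moment's care is the direction of the monotonicity: the argument relies on $d_i \ge 1$, i.e.\ on the normalisation of the weight sequence so that its smallest value (attained at $i = n$) equals one. Conceptually, the content is that the perceived density $\theta$, which counts each node in proportion to its degree, dominates the normalised epidemic size $I/(\langle d \rangle n)$ because high-degree nodes are over-counted while no node is under-counted. This sharp upper bound is exactly what will later let us pin down $I^{\text{SI}}_\infty \asymp n\theta_{\text{SI}}$ in the regime where the infection is concentrated on the lowest-degree nodes and the inequality becomes asymptotically tight.
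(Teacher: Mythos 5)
Your proof is correct and coincides with the paper's own one-line argument: both reduce the claim to $\sum_i d_i u_i \geq \sum_i u_i$, which holds termwise because $d_i = (n/i)^{\tau} \geq 1$ for all $i \in \{1,\dots,n\}$ and $u_i \geq 0$. You merely make explicit the normalisation $d_i \geq 1$ and the bound $u_i = \zeta(\phi_i) \in [0,1]$ that the paper leaves implicit, which is a fair clarification but not a different route.
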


\begin{proof}
\begin{align*}
\theta = \frac{1}{D} \sum_{i} d_i u_i \geq \frac{1}{D} \sum_{i}  u_i= \frac{1}{\langle d \rangle n } I
\end{align*}
\end{proof}

We are going to show that Eq.  \eqref{eq:I_theta_upperbound} is asymptotically exact for the case of the $(M_{\text{SI}})$.

\begin{lemma}
\label{l:I_aa_sim}
Assume $\alpha_S>0$.
\begin{align}
I_{\text{SI}} \sim \langle d \rangle \theta_{\text{SI}} n
\end{align}
\end{lemma}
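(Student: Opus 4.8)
The plan is to reduce the claim to a statement about the concentration of infection on the lowest-degree vertices. Since $\langle d \rangle n = D$ and $\theta D = \sum_i d_i u_i$, the assertion $I_{\text{SI}} \sim \langle d \rangle \theta_{\text{SI}} n$ is equivalent to $\sum_i u_i / \sum_i d_i u_i \to 1$. Because $d_i \ge 1$, Lemma~\ref{l:I_theta_upperbound} already supplies the bound $\le 1$, so the whole task is the matching lower bound. The intuition is that with $\alpha_S>0$ the awareness factor $e^{-\alpha_S \kappa \theta d_i}$ kills all but the vertices with $d_i$ essentially equal to $1$, so the weights $d_i$ multiplying $u_i$ become asymptotically irrelevant.

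First I would linearize $\zeta$. By Remark~\ref{r:rho}, $a := \alpha_S \kappa \theta \sim \frac{\alpha_S}{\alpha_S+\alpha_I}\log \kappa \to \infty$, and since $d \mapsto d e^{-ad}$ is decreasing for $d > 1/a$ while $d_i \ge 1 > 1/a$ for large $\kappa$, the quantity $\phi_i = \kappa \eta d_i e^{-a d_i}$ attains its maximum at $i=n$, where (using $\eta\le\theta$) $\phi_n = \kappa \eta e^{-a} \le \kappa \theta e^{-a} = \tfrac{a}{\alpha_S} e^{-a} \to 0$. From the exact identity $u_i/\phi_i = \beta_0/(\mu + \beta_0 \phi_i) \in [\beta_0/(\mu+\beta_0 \phi_n),\, \beta_0/\mu]$, whose endpoint ratio is $\mu/(\mu+\beta_0 \phi_n) \to 1$, the ratio $\sum_i u_i / \sum_i d_i u_i$ is sandwiched between $\tfrac{\mu}{\mu+\beta_0\phi_n}R(a)$ and $\tfrac{\mu+\beta_0\phi_n}{\mu}R(a)$, where, after cancelling the common $\kappa\eta$,
\[
R(a) := \frac{\sum_i \phi_i}{\sum_i d_i \phi_i}=\frac{\sum_i d_i e^{-a d_i}}{\sum_i d_i^2 e^{-a d_i}}.
\]
It therefore suffices to prove $R(a) \to 1$ as $a \to \infty$.

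The crux is this limit. Since $d_i \ge 1$ we have $R \le 1$ and $1 - R = \sum_i d_i(d_i-1)e^{-a d_i} / \sum_i d_i^2 e^{-a d_i} \ge 0$, so I would bound the numerator by splitting at a threshold $d_i \le 1+\varepsilon$ versus $d_i > 1+\varepsilon$. On the first piece $d_i-1\le\varepsilon$ gives a contribution $\le \varepsilon \sum_i d_i^2 e^{-a d_i}$. For the tail I would compare exponential rates: using $d_i = (n/i)^\tau$ and a sum-to-integral estimate, $\sum_{d_i>1+\varepsilon} d_i^2 e^{-a d_i} \asymp \tfrac{n}{\tau}\int_{1+\varepsilon}^\infty y^{1-1/\tau} e^{-ay}\,\d y \asymp n e^{-a(1+\varepsilon)}/a$, whereas the denominator is $\asymp \tfrac{n}{\tau}\int_1^\infty y^{1-1/\tau}e^{-ay}\,\d y \asymp n e^{-a}/a$ (a constant fraction $\sim(1-(1+\delta)^{-1/\tau})n$ of indices have $d_i\in[1,1+\delta]$, supplying this mass). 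Hence the tail ratio is $\asymp e^{-a\varepsilon}\to 0$, and letting $a\to\infty$ and then $\varepsilon\to 0$ yields $1-R\to 0$. Equivalently, sending $n\to\infty$ first turns both sums into $\tfrac1\tau\int_1^\infty y^{k-1/\tau}e^{-ay}\,\d y$ for $k=1,2$, and Watson's lemma gives both the common leading term $e^{-a}/a$ since each integrand equals $1$ at $y=1$.

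Combining the three steps, both sides of the sandwich converge to $R(a)\to 1$, so $\sum_i u_i/\sum_i d_i u_i \to 1$, i.e. $I_{\text{SI}} \sim \theta_{\text{SI}} D = \langle d \rangle \theta_{\text{SI}} n$ (consistent with the bound of Lemma~\ref{l:I_theta_upperbound}). I expect the main obstacle to be the concentration estimate $R(a)\to 1$: it is precisely the statement that the first and second degree-moment sums $\sum_i d_i e^{-a d_i}$ and $\sum_i d_i^2 e^{-a d_i}$ share the same leading asymptotics, and making it rigorous requires a sum-to-integral passage controlled uniformly enough to survive the iterated limits $n\to\infty$ then $\kappa\to\infty$, together with the input from Remark~\ref{r:rho} guaranteeing that $a=\alpha_S\kappa\theta$ genuinely diverges. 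By contrast, the linearization of $\zeta$ is routine once $\phi_n\to 0$ has been established.
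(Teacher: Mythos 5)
Your proposal is correct, and its skeleton matches the paper's: you first upgrade Lemma~\ref{l:xi} to a vanishing bound on $\phi_i$ (your $\phi_n=\kappa\eta e^{-a}\le\frac{a}{\alpha_S}e^{-a}\to 0$, using $\eta\le\theta$ and monotonicity of $xe^{-x}$ for $x\ge1$, is exactly the paper's $\bar{\phi}=\frac{1}{\alpha_S}\nu e^{-\nu}$ with $\nu=\alpha_S\kappa\theta$ from Remark~\ref{r:rho}), which linearizes $\zeta$ uniformly and reduces the claim to showing $\sum_i d_i e^{-\nu d_i}\big/\sum_i d_i^2 e^{-\nu d_i}\to 1$. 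Where you genuinely diverge is in this final concentration step. The paper passes to the integrals $J_m=\frac{1}{\tau}\nu^{\frac{1}{\tau}-m}\Gamma\left(m-\frac{1}{\tau},\nu\right)$ and gets $J_2/J_1\to1$ from the recurrence and tail asymptotics of the incomplete gamma function; you instead prove $1-R\le\varepsilon+O(e^{-a\varepsilon})$ directly by writing $1-R=\sum_i d_i(d_i-1)e^{-ad_i}\big/\sum_i d_i^2 e^{-ad_i}$, splitting at $d_i\le 1+\varepsilon$, and comparing exponential rates. Your route is more elementary (no special functions), makes the mechanism transparent — the stationary mass concentrates on the minimum-weight vertices $d_i\approx 1$ — and is robust: it needs only $d_i\ge1$ and a positive fraction of indices near the minimum, not the specific power-law form. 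The paper's route buys the closed-form expressions $J_m$, which encode the full asymptotic expansion rather than just the leading order. Two small remarks: your phrase ``both sums become $\frac{1}{\tau}\int_1^\infty y^{k-1/\tau}e^{-ay}\,\d y$ for $k=1,2$'' has an off-by-one in the exponent — the correct integrands are $y^{m-\frac{1}{\tau}-1}$ for $m=1,2$, i.e.\ $y^{-1/\tau}$ and $y^{1-1/\tau}$, consistent with your own earlier tail estimate, and harmless since only the value at $y=1$ matters for the leading term; and you are right to flag the uniformity of the sum-to-integral passage under the iterated limits $n\to\infty$ then $\kappa\to\infty$ as the delicate point — the paper makes the same ``fixed $\kappa$ first'' simplification without additional control, so your proof is no less rigorous than the original on this score.
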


\begin{proof}
First we show a stronger version of Lemma \ref{l:xi}, that is, we will construct a bound $0 \leq \phi_i \leq \bar{\phi} \to 0$ as $n \to \infty, \ \kappa \to \infty.$ Define 
$$\nu:=\alpha_S \kappa \theta \sim \frac{\alpha_S}{\alpha_S+\alpha_I}\log \kappa .$$

\begin{align*}
\phi_i= \kappa \eta(\theta)d_i e^{-\alpha_S \kappa \theta d_i} \overset{\eta(\theta) \leq \theta}{ \leq} \frac{1}{\alpha_S} \nu d_i e^{-\nu d_i} \leq \frac{1}{\alpha_S} \nu  e^{-\nu}:= \bar{\phi}
\end{align*}
since $\nu d_i \geq \nu \geq 1$ for large enough $\kappa$ and $xe^{-x}$ is monotone decreasing for $x \geq 1$. Furthermore,
\begin{align*}
\underbrace{\frac{\beta_0}{\mu +\beta_0 \bar{\phi}}}_{\to \frac{\beta_0}{\mu}} \phi_i \leq \zeta \left( \phi_i \right) \leq \frac{\beta_0}{\mu} \phi_i
\end{align*}
making $u_i \sim \frac{\beta_0}{\mu} \phi_i$ uniformly in $i$.

\begin{align*}
\frac{D \theta }{I}=\frac{\sum_{i}d_i u_i}{\sum_{i} u_i} \sim \frac{\sum_{i}d_i \phi_i}{\sum_{i} \phi_i}= \frac{\sum_{i}d_i^2 e^{-\nu d_i}}{\sum_{i}d_i e^{-\nu d_i}}=\frac{\frac{1}{n}\sum_{i}d_i^2 e^{-\nu d_i}}{\frac{1}{n}\sum_{i}d_i e^{-\nu d_i}}.
\end{align*}

For $m=1,2$ and fixed $\kappa$ ones has
\begin{align*}
\frac{1}{n}\sum_{i}d_i^m e^{-\nu d_i} \sim \int_{0}^{1}x^{-m\tau}e^{-\nu x^{-\tau}} \d x=:J_m.
\end{align*}

After substituting $y=\nu x^{-\tau}$ we obtain
\begin{align*}
J_m= \frac{1}{\tau} \nu^{\frac{1}{\tau}-m}\int_{\nu}^{\infty}y^{m-\frac{1}{\tau}-1}e^{-y} \d y= \frac{1}{\tau} \nu^{\frac{1}{\tau}-m} \Gamma \left(m-\frac{1}{\tau},\nu \right),
\end{align*}
where $\Gamma(s,x)$ is the incomplete gamma function satisfying the properties
\begin{align*}
\Gamma(s+1,x)=&s \Gamma(s,x)+x^{s}e^{-x}, \\
\lim_{x \to \infty} \frac{\Gamma(s,x)}{x^{s-1}e^{-x}}=1.
\end{align*}

Therefore,
\begin{align*}
\frac{J_2}{J_1}=\frac{1}{\nu} \frac{\Gamma(2-\frac{1}{\tau}, \nu)}{\tau(1-\frac{1}{\tau}, \nu)}=-\left(\frac{1}{\tau}-1\right) \frac{1}{\nu}+\frac{\nu^{-\frac{1}{\tau}e^{-\nu}}}{\tau(1-\frac{1}{\tau}, \nu)} \to 1
\end{align*}
as $\nu \to \infty$ ($\kappa \to \infty).$
\end{proof}

Lastly, we prove the paradox behaviour for large enough $\kappa$ when $\frac{1}{2} \leq \tau <1.$

\begin{theorem}
\label{t:main}
Assume $\frac{1}{2} \leq \tau <1$. Then there is a large enough fixed $\kappa$ such that for all large enough $n$ we have $I_{\text{I}}<I_{\text{SI}}$.
\end{theorem}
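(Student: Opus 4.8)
The plan is to reduce the inequality $I_{\text{I}}<I_{\text{SI}}$ to a comparison of two explicit functions of $\kappa$, exploiting that both models share the same perceived density scale $\theta\asymp\frac{\log\kappa}{\kappa}$ but convert it into total infection sizes with very different exponents once $\tau>\tfrac12$. Throughout I follow the paper's convention of first sending $n\to\infty$ and only afterwards $\kappa\to\infty$, so that every quantity below is read as a function of $\kappa$ obtained in the $n\to\infty$ limit, and I only ever need the upper $\asymp$-bound on $I_{\text{I}}$ together with the lower $\asymp$-bound on $I_{\text{SI}}$.

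First I would assemble the $\aw$-side estimate. Since $\alpha_S=1>0$ in the $\aw$ model, Lemma~\ref{l:I_aa_sim} applies and gives $I_{\text{SI}}\sim\langle d\rangle\,\theta_{\text{SI}}\,n$; as $\langle d\rangle\asymp1$ this reads $I_{\text{SI}}\asymp n\,\theta_{\text{SI}}$. Remark~\ref{r:rho} with $\alpha_S+\alpha_I=2$ then yields $\theta_{\text{SI}}\sim\frac{\log\kappa}{2\kappa}$, so $I_{\text{SI}}\asymp n\,\frac{\log\kappa}{\kappa}$.

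Next I would assemble the $\io$-side estimate, splitting on $\tau$. For $\tfrac12<\tau<1$, Remark~\ref{r:I_io} (the $\alpha_S=0$ case) gives $I_{\text{I}}\asymp\theta_{\text{I}}^{\frac{\tau}{1-\tau}}\,n$, while Remark~\ref{r:rho} with $\alpha_S+\alpha_I=1$ gives $\theta_{\text{I}}\sim\frac{\log\kappa}{\kappa}$, whence $I_{\text{I}}\asymp n\bigl(\frac{\log\kappa}{\kappa}\bigr)^{\frac{\tau}{1-\tau}}$. Forming the ratio,
\[
\frac{I_{\text{I}}}{I_{\text{SI}}}\asymp
\frac{\bigl(\tfrac{\log\kappa}{\kappa}\bigr)^{\frac{\tau}{1-\tau}}}{\tfrac{\log\kappa}{\kappa}}
=\Bigl(\frac{\log\kappa}{\kappa}\Bigr)^{\frac{2\tau-1}{1-\tau}},
\]
and since $\tau>\tfrac12$ makes the exponent $\frac{2\tau-1}{1-\tau}$ strictly positive while $\frac{\log\kappa}{\kappa}\to0$, this ratio tends to $0$ as $\kappa\to\infty$. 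The boundary case $\tau=\tfrac12$ is not covered by the power law of Remark~\ref{r:I_io} and must be treated through the logarithmic branch: combining Lemma~\ref{l:I_I} ($I_{\text{I}}\asymp\kappa\eta(\theta_{\text{I}})\,n$) with the $\tau=\tfrac12$ line of the (unlabeled) lemma preceding Remark~\ref{r:I_io}, namely $\theta_{\text{I}}\asymp-\kappa\eta\,\log\kappa\eta$, and inverting $\theta\asymp x\log(1/x)$ with $x:=\kappa\eta\to0$ (so $\log(1/x)\sim\log(1/\theta_{\text{I}})\sim\log\kappa$), I obtain $x\asymp\theta_{\text{I}}/\log(1/\theta_{\text{I}})\asymp\frac1\kappa$, hence $I_{\text{I}}\asymp n/\kappa$ and $\frac{I_{\text{I}}}{I_{\text{SI}}}\asymp\frac{1}{\log\kappa}\to0$.

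In either regime $I_{\text{I}}/I_{\text{SI}}\to0$ as $\kappa\to\infty$, so there is a fixed $\kappa$ large enough that this ratio sits strictly below $1$; for that $\kappa$ and all sufficiently large $n$ we conclude $I_{\text{I}}<I_{\text{SI}}$. The step I expect to be the main obstacle is the bookkeeping of implied constants and the limit order: the argument works only because the $\asymp$-constants in Lemma~\ref{l:I_aa_sim}, Lemma~\ref{l:I_I}, Remark~\ref{r:I_io} and Remark~\ref{r:rho} depend only on $\tau,\beta_0,\mu$ and \emph{not} on $\kappa$, so that the genuine decay of the $\kappa$-power is not masked by a hidden $\kappa$-dependent prefactor. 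I would therefore verify this uniformity explicitly, and confirm that the interchange of the $n\to\infty$ and $\kappa\to\infty$ limits is legitimate for the eventually fixed $\kappa$, before asserting the strict inequality.
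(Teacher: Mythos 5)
Your proposal is correct and follows essentially the same route as the paper's proof: both combine Remark~\ref{r:rho} with Lemma~\ref{l:I_aa_sim} on the $\aw$ side and Lemma~\ref{l:I_I} together with the $\theta$--$\kappa\eta$ relation on the $\io$ side, split into the cases $\tau=\tfrac12$ and $\tfrac12<\tau<1$, and send $\kappa\to\infty$ after $n\to\infty$. The only cosmetic difference is at the boundary case: you explicitly invert $\theta_{\text{I}}\asymp-\kappa\eta\log\kappa\eta$ to get $\kappa\eta\asymp\kappa^{-1}$, whereas the paper avoids the inversion by bounding the ratio $\theta_{\text{I}}/(\kappa\eta)\asymp-\log\kappa\eta(\theta_{\text{I}})\to\infty$ directly, needing only $\kappa\eta\to0$ (guaranteed by Lemma~\ref{l:eta_bound}) rather than its precise rate.
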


\begin{proof}
We know from Remark \ref{r:rho} that $\theta_{\text{SI}}, \theta_{\text{I}} \asymp \frac{\log \kappa}{ \kappa}.$

\begin{align*}
I_{\text{SI}} \asymp& \theta_{\text{SI}} n \asymp \theta_{\text{I}} n \overset{\textit{Lemma } \ref{l:I_I}}{\asymp} \frac{\theta_{\text{I}}}{\kappa \eta(\theta_{\text{I}})} I_{\text{I}} \Rightarrow \\
\frac{I_{\text{SI}}}{I_{\text{I}}} \asymp & \frac{\theta_{\text{I}}}{\kappa \eta(\theta_{\text{I}})}
\end{align*}

When $\tau= \frac{1}{2}$ we have $\theta_{\text{I}} \asymp - \kappa \eta (\theta_{\text{I}}) \log \kappa \eta (\theta_{\text{I}})$. As $\theta_{\text{I}} \to 0$ and according to Lemma \ref{l:eta_bound} $\kappa \eta(\theta_{\text{I}})<\frac{1}{2}$, we must also have $\kappa \eta (\theta_{\text{I}}) \to 0$.

\begin{align*}
\frac{\theta_{\text{I}}}{\kappa \eta(\theta_{\text{I}})} \asymp - \frac{1}{\log \kappa \eta (\theta_{\text{I}})} \to \infty
\end{align*}

When $\frac{1}{2}<\tau<1$ we have $\theta_{\text{I}} \asymp \left(\kappa \eta(\theta_{\text{I}}) \right)^{\frac{1}{\tau}-1},$ which once again implies $\kappa \eta(\theta_{\text{I}}) \to 0$.
\begin{align*}
\frac{\theta_{\text{I}}}{\kappa \eta(\theta_{\text{I}})} \asymp \left(\kappa \eta(\theta_{\text{I}}) \right)^{\frac{1}{\tau}-2} \to \infty
\end{align*}
\end{proof}

\subsection{Nodes with low and high degree over time}

To better understand the paradox, we performed stochastic simulations on a Chung-Lu random network with a degree exponent $\gamma=2.3$, density parameter $\kappa=10$, and size $n=10^4$. We divided the nodes into two groups: the top $1\%$ of nodes with the highest degrees formed the "high" group, while the remaining $99\%$ were classified as the "low" group.

In Figure \ref{fig:supp2}, we plot the infection density over time. Initially, in all three models, the infection spreads mainly among the high-degree nodes. After this initial phase, the effects of awareness differ across the models. In the S-aware and SI-aware models, the infection density in the high group decreases, whereas in the I-aware model, it continues to increase monotonically.

At this point, the infection density in the low group, which makes up $99\%$ of the nodes, reaches a level where it begins to influence the awareness of the high group. In both the $\so$ and $\aw$ models, susceptible high-degree nodes exhibit high levels of awareness because they have several infected low-degree neighbors, making them contract the disease with a very low probability. Conversely, in the $\io$ model, susceptible nodes are not aware by definition, and high-degree susceptible nodes are reinfected by their infected neighbors in the low group with a high chance.

\begin{figure}[hb!]
    \centering
    \includegraphics[width=0.9\linewidth]{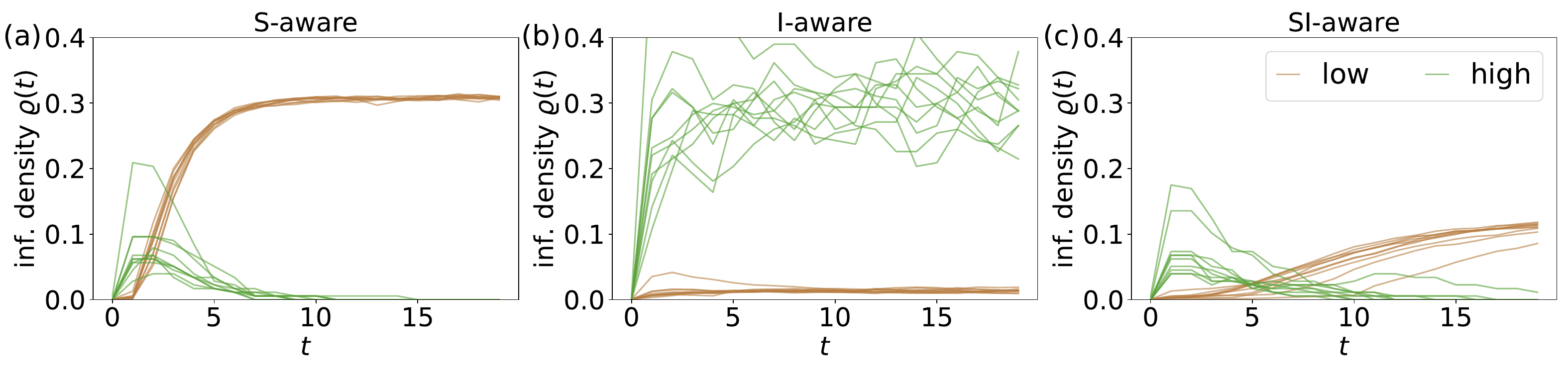}
    \caption{Stochastic simulations on a Chung-Lu network ($\gamma=2.3$, $\kappa=10$, $n=10^4$), with nodes divided into "high" (top $1\%$ with highest degree) and "low" (rest $99\%$) groups. Infection density over time shows initial concentration in the high group, followed by a decline in (a) $\so$ and (c) $\aw$, but continued high density in (b) $\io$.}
    \label{fig:supp2}
\end{figure}

This results in a metastable state where the infection density among high-degree nodes is low in the $\so$ and $\aw$ models but remains high in the $\io$ model. The awareness generated by high-degree nodes significantly influences the number of infections in the low group. In the $\io$ model, infected high-degree nodes raise the awareness of low-degree nodes, reducing infections within the low group. In contrast, the lower awareness induced by high-degree nodes in the $\so$ and $\aw$ models results in more infections within the low group.

\subsection{Results on real networks}
We collected $17$ real networks from publicly available datasets, with their main properties summarized in Table \ref{tab:graph-properties}. Stochastic simulations revealed that $5$ of these networks exhibit the paradox. To identify which network properties might explain the paradox, we plotted the ratio of the $\io$ and $\aw$ metastable epidemic sizes $I_{\text{rat}}$ as a function of the average degree and the standard deviation of the degree distribution (Fig. \ref{fig:stds} (a)). However, the separation between networks that exhibit the paradox and those that do not was not as clear as expected. After applying degree-preserving random shuffling of the edges, the paradox emerged in $5$ additional networks (Fig. \ref{fig:stds} (b)). These previous two observations suggest, that factors beyond heterogeneity and density influence the phenomenon. 

\begin{figure}[h]
    \centering
    \includegraphics[width=0.97\textwidth]{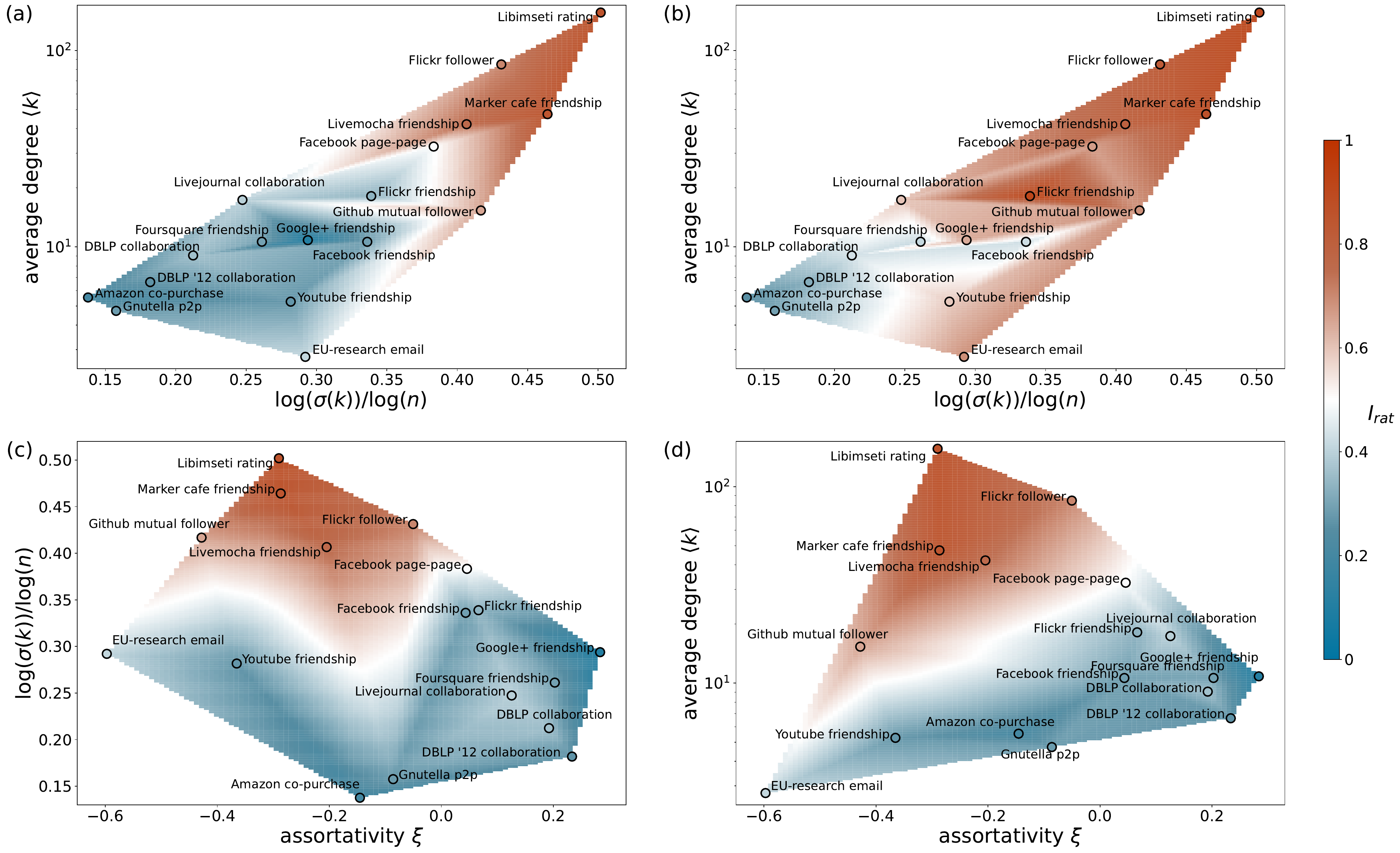}
    \caption{The ratio of metastable epidemic sizes $I_{\text{rat}}=\frac{I_{\text{SI}}}{I_{\text{I}}+I_{\text{SI}}}$ (where $I_{\text{I}}$ and $I_{\text{SI}}$ denote the I-aware and the SI-aware epidemic sizes, respectively). The paradox occurs if the ratio exceeds $0.5$ (red). The continuous surface is fitted on the data points via linear interpolation. (a) The ratio $I_{\text{rat}}$ as the function of the average degree and the standard deviation of the degree distribution of the real networks shows a clear separation. (b) After a degree-preserving random shuffling of the edges, the paradox appears in 5 additional networks. (c) The ratio $I_{\text{rat}}$ as the function of the average degree and the assortativity parameter $\xi$ (defined in the caption of Fig. 4 of the main text) of the real networks shows a clear separation.
    (d) Fig. 4 from the main text.}
    \label{fig:stds}
\end{figure}

From the previous sections, we can conclude that the influence of high-degree nodes on low-degree nodes plays a crucial role in formulating the paradox. However, real-life networks often exhibit assortativity, where nodes with similar degrees are more likely to be connected, weakening the interplay between high- and low-degree nodes. This could explain the absence of the paradox in some cases. Indeed, when we plot the ratio $I_{\text{rat}}$ as a function of the average degree and the assortativity parameter $\xi$, we observe a clearer separation between networks that exhibit the paradox and those that do not.

\setlength{\tabcolsep}{.66em}

\begin{table}[ht!]
    \centering
    \begin{tabularx}{\linewidth}{lrrrrrrrr}
        \toprule
        \textbf{Name}&\textbf{Type} & $\mathbf{n}$ & $\mathbf{\langle k \rangle}$ & \textbf{Med. degree} & \textbf{Max degree/$\mathbf{n}$} & $\mathbf{\xi}$ & $\mathbf{\sigma(k)}$ & $\mathbf{CC}$ \\
        \midrule
        Amazon \cite{amazon} & co-purchase         & 334863    & 5.5  & 4  & 0.00164 & -0.145 & 5.762  & 0.397 \\
        Marker cafe \cite{marker_cafe} &friendship     & 69413     & 47.4 & 6  & 0.12865 & -0.287 & 176.632& 0.186 \\
        Livemocha \cite{livemocha} &friendship       & 104103    & 42.1 & 13 & 0.02863 & -0.205 & 109.675& 0.054 \\
        Gnutella \cite{gnutella}& p2p               & 62586     & 4.7  & 2  & 0.00152 & -0.086 & 5.701  & 0.005 \\
        Flickr \cite{flickrfollower}& follower            & 214626    & 84.9 & 23 & 0.04886 & -0.050 & 199.287& 0.146 \\
        Google+ \cite{marker_cafe}&friendship         & 211187    & 10.8 & 2  & 0.00848 & 0.283  & 36.688 & 0.142 \\
        DBLP '12 \cite{dblp12} &collaboration     & 317080    & 6.6  & 4  & 0.00108 & 0.233  & 10.008 & 0.632 \\
        Facebook \cite{facebookptp}& page-page       & 50517      & 32.4 & 13 & 0.02908 & 0.046  & 63.473 & 0.335 \\
        Facebook \cite{facebook_company}&friendship       & 5793      & 10.6 & 2  & 0.05524 & 0.043  & 18.389 & 0.171 \\
        Github \cite{github}& mutual follower       & 37702      & 15.3 & 6 & 0.25086 & -0.428  & 80.784 & 0.168 \\
        Livejournal \cite{livejournal}&collaboration  & 3997962   & 17.3 & 6  & 0.00371 & 0.126  & 42.957 & 0.284 \\
        Foursquare \cite{foursquare}& friendship      & 114324    & 10.6 & 5  & 0.01063 & 0.203  & 20.946 & 0.179 \\
        DBLP \cite{dblp}&collaboration   & 1836596   & 9.0  & 4  & 0.00121 & 0.192  & 21.381 & 0.631 \\
        EU-research \cite{email_eu}& email          & 265214    & 2.8  & 1  & 0.02879 & -0.597 & 38.369 & 0.067 \\
        Youtube \cite{youtube}&friendship         & 1134890   & 5.3  & 1  & 0.02534 & -0.365 & 50.754 & 0.081 \\
        Libimseti \cite{libimseti}&rating           & 220970    & 156.0& 57 & 0.15110 & -0.290 & 481.272& 0.043 \\
        Flickr \cite{flickr_friendship}& friendship          & 1715255   & 18.1 & 1  & 0.01588 & 0.066  & 129.615& 0.184 \\
        \bottomrule
    \end{tabularx}
    \caption{Network properties of real networks including: network size $n$; average degree $\langle k \rangle$; median degree; normed maximum degree by size; degree correlation exponent $\xi$ \cite{correlationPastor}, derived from fitting $\xi$ in the equation $k_{\text{nn}}(k) = ck^{\xi}$, where $d_{\text{nn}}(k)$ is the average neighbor degree of nodes with degree $k$; standard deviation of the degree sequence $\sigma(k)$; clustering coefficient $CC$.}
    \label{tab:graph-properties}
\end{table}

\begin{table}[h!]
    \centering
    \begin{tabular}{c|c|c|c|c}
         & $\langle k \rangle$ & $\frac{\log(\sigma(k))}{\log(n)}$ & $\xi$ & $cc$   \\
         \hline
         corr. with $I_{\text{rat}}$&0.745 &0.860 & -0.485& -0.332
    \end{tabular}
    \caption{Pearson correlation value between $I_{\text{rat}}$ and average degree $\langle k \rangle$; normalized standard deviation of the degree sequence $\sigma(k)$; degree correlation exponent $\xi$;  clustering coefficient $CC$.}
    \label{tab:corr}
\end{table}

\subsection{Generating Chung-Lu network with tunable assortativity}

In the inset in Fig. 4 of the main text, we required a generative model that allows us to adjust the assortativity of the network to investigate the dependence of the paradox on assortativity. To maintain consistency with the network models described in the main text and \cite{chunglu}, we designed a kernel for the Chung-Lu network that adjusts the assortativity with a single parameter, $\psi$ and at the same time scales the density with $\kappa$. 
The original Chung-Lu network uses node weights $d_i=\left(\frac{i}{n}\right)^{-\frac{1}{1-\gamma}}$ to define the edge probability kernel: $w_{ij}=\kappa \frac{d_i d_j}{D}$, where $D=\sum_i d_i$. 

To create a kernel that scales the assortativity of the network, we define the edge probability $w_{ij}(\psi))$ as: $w_{ij}(\psi)= \kappa \exp(\psi|d_i-d_j|) \frac{d_i d_j}{D}$. If $\psi<0$ the network becomes assortative because nodes with similar degrees are more likely to connect. Conversely, if $\psi>0$ the network becomes disassortative. Using parameter values $\psi=-0.02, -0.0063, -0.0013,  0.0006,  0.0012, 0.0015$ and $\kappa = 1,5,10,20,40$, we generated networks of size $n=10^4$ with assortativity and average degree shown in the inset of Fig. 4 of the main text.

\end{document}